\newtheorem{theorem}{Theorem}
\newtheorem{lemma}{Lemma}
\date{}
\numberwithin{equation}{section}
\begin{document}\begin{center}
	\baselineskip .2in {\large\bf Interplay between reproduction and age selective harvesting delays of a single population non-autonomous system}
\end{center}
\begin{center}
	\baselineskip .2in {\bf N. S. N. V. K. Vyshnavi Devi$^{1}$, Debaldev Jana$^{1,\dag}$ and M. Lakshmanan$^{2}$ }
	
	{\small\it $^1$Department of Mathematics \& SRM Research Institute,}
	\\ SRM Institute of Science and Technology, Kattankulathur-603 203, Tamil Nadu, India. \\
	E-mail: vyshu$\textunderscore$nandi@ymail.com$^{1}$ and debaldevjana.jana@gmail.com$^{1,\dag}$

	{\small\it $^2$Centre for Nonlinear Dynamics,}
	\\ School of Physics, Bharathidasan University, Tiruchirapalli - 620 024, India.\\
	E-mail: lakshman.cnld@gmail.com$^{2}$
	\end{center}
\begin{abstract}
This paper is concerned with an analysis of the dynamics of a non-autonomous, single population age based growth model with harvesting formulation. First, we derive sufficient conditions for permanence and positive invariance. Then, by constructing a scalar function, namely the Lyapunov function, we arrive at a suitable criterion for global attractivity. With the help of Brouwer fixed point and continuation theorems, we obtain constraints for the existence of a positive periodic solution. Then we prove that there exists only one solution which is almost periodic in nature that is distinct from every other solution. Further, we carry out a numerical simulation to support the analytical findings.\\
    
    \noindent{ \emph{Keywords and phrases}: Non-autonomous dynamical system, Age-structured model, Permanence, Periodic solution, Global attractivity, Almost periodic solution}\\
    
    \noindent{ \emph{AMS subject classifications}: 37B55, 92D25, 92D40}

\end{abstract}

\section{Introduction}\label{S1}
    Population dynamics speaks about the changes that occur in a given population as it varies with time, due to factors like birth, death, and migration. It acts as the fundamental principle for interpreting various patterns in fishery, predation, and harvesting. Population change in fishes is affected by many factors like availability of food, diseases, predators and climate \cite{MKS13, SA10, UG18}.\\
    
    We are particularly interested in \textit{Tenualosa ilisha} (Hilsa shad), a unique species of marine fish that migrates to inland waters in the Indian subcontinent in order to spawn \cite{DOF07}. Hilsa is usually found in the Bay of Bengal area which consists of large river systems \cite{DOFB08}. The length of the hilsa fish is strongly related to its body weight \cite{FMB15}. During the first year, the growth is roughly at the rate of an inch per month \cite{JSM51}. The spawning of Hilsa is seasonal \cite{BUS11} and is observed during “July-August to October-November” \cite{RMA17}. It takes 8-9 months for hilsa to grow to 163-225 mm in size. The fish attains maturity, typically at the onset of the second year or upon reaching the end of the first year when it undergoes reproduction \cite{BU15, SAV12}, generally when it is 250-300 mm in length \cite{RBT85}. Most commonly the hilsa is found to be 350-400 mm in length which weighs around 1 kg. It may grow up to 600 mm in size and weigh 3 kg at the age of four \cite{DOFB08}. By the end of its first and second years, it is usually of 16 and 27cm (approximately) in size, respectively \cite{MAR14}. Though biological studies have shown that hilsa can live up to 6.5 years of age, only a few fishes survive beyond 4 years \cite{HGC04}. Therefore, the lifespan of hilsa is considered to be 4 years \cite{HGC04}. 
    
    According to the data collected from upcountry waters, unrestricted fishing rights, water contamination, non-selective harvesting of hilsa (particularly juveniles), the destruction of spawning grounds lead to the decline in hilsa population \cite{DF08,DF12,HMS16}. If we continue harvesting hilsa before it attains sexual maturity, the population would face extinction in the near future \cite{JDR16}. Therefore it is essential to go for age-selective harvesting of Hilsa fishes.\\
    
    Think of a pine tree that has needle-like or scale-like leaves that are typically evergreen. It is well known that the photosynthetic ability of leaves changes as they mature and get old \cite{BB00}. Photosynthesis takes place at a low pace in young leaves, maximum pace in matured leaves, and the rate decreases as leaves age \cite{FRO52,SBN35}. This is just one example which shows the variation in parametric influence at young stand ages and old ages \cite{SXB12}. This kind of time-dependent variation is often seen in ecological growth models that are influenced by environmental fluctuations. For an intricate ecological growth model, it is of utmost importance to identify these influential, time-variant parameters and their impact on the system dynamics. The same can be applied to the growth model in hilsa fish.\\
    
    As discussed earlier, for biological and economic benefits, we can harvest the hilsa when it crosses its maturity age and harvestable yield. By this, we can also assure that the population will be healthy and sustainable. The rate of harvest can be measured in catch per man-hour (rate of catch) \cite{EW46}. Several authors \cite{EW46,MB47} mentioned that temperature of water influences the rate of catch, since “water temperature is a major spawning stimulus” \cite{SJ59}. This shows the dependence of birth rate on water temperature. Also, water level, turbidity, heavy rains or lack of frequent rains, air temperature, etc are some of the factors affecting the system parameters which in turn affect the rate of harvest \cite{SJ59}. Some physical and chemical/biochemical factors like water temperature, pH, alkalinity, hardness, etc are important to fish growth, mortality and harvestable yield \cite{VRC05}. All the above-mentioned factors are time variant. They influence the system parameters, thereby suggesting the time dependence of the rate of harvest and harvestable yield.\\
    
    The standard population equilibrium model, namely Verhulst's growth model \cite{V38}, is a typical application of logistic equation which describes population growth in a limited environment. Many illustrious people like Cunningham \cite{CW57},  Wright \cite{WE55}, Kakutani and Markus \cite{KM58}, Oster and Takahashi \cite{OT74}, Auslander et al. \cite{AOH74}, Kaplan and Yorke \cite{KY75}, Beddington and May \cite{BM75}, Blythe et al. \cite{BNG82}, Seifert \cite{S87},Kuang \cite{K93}, Walther \cite{W95}, Rodriguez \cite{R98} and Ruan \cite{RS06} worked towards developing an appropriate logistic delay differential equation (DDE), leading to single species growth and age structured DDE model. According to Arino, Wang and Wolkowicz \cite{AWW06} the rate at which population changes depends on the following aspects: growth (includes delay $\vartheta$), death and intraspecific competition. It is also assumed that the death rate and intraspecific competition rate both together contribute to the rate of decline, which is instantaneous. The death rate is represented using a linear term and a quadratic term is used to describe the rate of intraspecific competition \cite{JDA16}. Now in our model we will assume that individuals of hilsa fish which are born before a time $\vartheta_1$ from the present instant $t$ are sexually able to give rise to new offspring and we can harvest for our economic profit those hilsa fishes which are born before $\vartheta_2$ time from now ($t$). So, we can state that a density $\digamma(t)$ of hilsa fish given by the expression $\digamma(t)=\frac{\varrho\varphi \digamma(t-\vartheta_1)}{\varphi e^{\varphi \vartheta_1}+c (e^{\varphi \vartheta_1}-1)\digamma(t-\vartheta_1)}$ (see Appendix A) can give birth at time $t$ and we can further harvest a density $\digamma(t)=\frac{qE\varphi \digamma(t-\vartheta_2)}{\varphi e^{\varphi \vartheta_2}+c (e^{\varphi \vartheta_2}-1)\digamma(t-\vartheta_2)}$ (see Appendix A) at time $t$. So our model of hilsa fish becomes \cite{JDS19}:
\begin{equation}\label{eq:q}
    \digamma^{'}(t)=\dfrac{\varrho\varphi \digamma(t-\vartheta_1)}{\varphi e^{\varphi\vartheta_1}+c(e^{\varphi\vartheta_1}-1)\digamma(t-\vartheta_1)}-\varphi \digamma(t)-c\digamma^2(t)-\dfrac{qE\varphi \digamma(t-\vartheta_2)}{\varphi e^{\varphi\vartheta_2}+c(e^{\varphi\vartheta_2}-1)\digamma(t-\vartheta_2)}\;,
\end{equation}\\
    where $\digamma(t)$ specifies the count of fishes at $t$(time), prime stands for differentiation with respect to time, $\varrho$ represents the birth rate, and $\varphi$ is natural mortality. Also $c$, $E$, and $q$ indicate removal due to intraspecific competition, fishing effort, and catchability coefficient, respectively. Various environmental factors and human aspects have an effect on the value of $q$ \cite{DT17}. Time-dependent catchability is very common and is caused by environmental, anthropogenic and biological processes \cite{WMJ09}. Note that in the above $\varrho$ represents the fishes reaching a certain size or reproductive stage. $\varphi$ includes non-human predation, diseases and old age. $E$ includes particulars like the number of individuals involved in fishing, the number of boats available, etc. Intraspecific competition $c$ varies in line with resource attributes and grouping with respect to time and space, environmental factors, etc \cite{WWH06}. Also, the size of the fish population affects the number of fishes caught (harvesting) and vice versa, which clearly describes the time dependence of every parameter.\\
    
    Jana, Dutta and Samanta \cite{JDS19} carried out a case study on the dynamics of model \eqref{eq:q} wherein the parameters are time-independent (autonomous system) at Sundarban estuary, Bay of Bengal, India. For one particular set of parameters (constant values), if the autonomous system is stable, for a similar parameter set which is time variant (non-autonomous system), the stability properties might not be the same. This has motivated us to consider the time-variant environment at Bay of Bengal which in turn affects the life of hilsa. Thus, incorporating the time factor into the parameters of the model \eqref{eq:q} we get a non-autonomous system as follows,
\begin{equation}\label{eq:1}
\begin{split}
   \digamma^{'}(t)=\dfrac{\varrho(t)\varphi(t)\digamma(t-\vartheta_1)}{\varphi(t)e^{\varphi(t)\vartheta_1}+c(t)(e^{\varphi(t)\vartheta_1}-1)\digamma(t-\vartheta_1)}-\varphi(t)\digamma(t)-c(t)\digamma^2(t) \\
-\dfrac{q(t)E(t)\varphi(t)\digamma(t-\vartheta_2)}{\varphi(t)e^{\varphi(t)\vartheta_2}+c(t)(e^{\varphi(t)\vartheta_2}-1)\digamma(t-\vartheta_2)}\cdot
\end{split}
\end{equation}
   Many authors like Li and Takeuchi \cite{HY15}, Chen, Chen and Shi \cite{CS08}, Cui and Takeuchi \cite{CT06}, Fan and Kaung \cite{FK04}, Zeng and Fan \cite{ZF08} have discussed non-autonomous dynamical systems, studied their behavior and obtained rich dynamics. For the above age-based growth model \eqref{eq:1} with time-variant parameters, our objective here is to study the dynamical aspects like positive invariance, permanence, and global attractivity by formulating a Lyapunov function. Also, we wish to verify the occurrence of a positive solution that is periodic in nature and a unique almost periodic solution.\\
   
   We proceed in a systematic manner starting with some basic definitions in section 2. Then in sections 3, 4 and 5, we demonstrate positive invariance, permanence and global attractivity for system \eqref{eq:1}, respectively. Then comes a set of proofs in sections 6 and 7 for the occurrence of positive solutions that are periodic and (unique) almost periodic in nature to system \eqref{eq:1}, respectively. This is followed by many numerical examples and their graphical interpretation in section 8. The paper ends with a conclusion in section 9.
   
\section{Basic definitions}\label{s2}
   Some notations, assumptions and definitions used in the following sections are introduced here in this section.\\
\textbullet {} Only those solutions $\digamma(t)$ with $\digamma(t_0)>0$ $\forall$ $t_0>0$ are taken into consideration (for biological reasons). \\
\textbullet {} $\varrho(t)$, $c(t)$, $E(t)$, $\varphi(t)$ and $q(t)$ have both lower and upper bounds (positive constants) and are assumed to be continuous. \\
\textbullet {} For a bounded (enclosed within positive values) continuous function $a(t) \in$ $\mathbb{R}$, $a_L$ is the infimum of $a(t)$ and $a_M$ is the supremum of $a(t)$ where $t$(time) is a value in $\mathbb{R}$.\\
\textbullet {} Every co-efficient of system \eqref{eq:1} must satisfy
\begin{center}
min$\{\varrho_L,\varphi_L,q_L,c_L,E_L\}>0$ \&
max$\{\varrho_M,\varphi_M,q_M,c_M,E_M\}<\infty$.
\end{center} \vspace{0.5cm}
\textbf{Definition 2.1.} The interval $[\digamma^\ast-\varepsilon,\digamma^\ast+\varepsilon]$ with $\digamma^\ast$ lying in it, such that any solution $\digamma(t)$ of system \eqref{eq:1}, which is lying in the interval at $t=t_0$, will remain in this interval $\forall$ $t\geqslant t_0$ and will tend to $\digamma^\ast$ as $t\to\infty$. This interval with reference to system \eqref{eq:1} is positively invariant. \vspace{0.5cm}\\
\textbf{Definition 2.2.} System \eqref{eq:1} shall remain permanent whenever $\exists$ stationary values $\Delta_1$ and $\Delta_2$ (both positive) with $0<\Delta_1\leqslant\Delta_2$ $\ni$ $\lim\limits_{t\to +\infty}$ inf $\digamma(t)\geqslant\Delta_1$ and $\lim\limits_{t\to +\infty}$ sup $\digamma(t)\leqslant\Delta_2$ holding good for those solutions of system \eqref{eq:1} that have positive initial values. \vspace{0.5cm}\\
\textbf{Definition 2.3.} If $\lim\limits_{t\to\infty}$ $|\digamma(t)-\digamma^\ast(t)|=0$ holds for a solution $\digamma(t)$ to system \eqref{eq:1} then $\digamma^\ast(t)$, which is a bounded non-negative solution to system \eqref{eq:1}, shall be globally attractive. \vspace{0.2cm}\\
\textbf{Definition 2.4.} The set of all solutions of system \eqref{eq:1} is ultimately bounded when there is a $V>0$ $\ni$ for every solution $\digamma(t)$ of \eqref{eq:1}, $\exists$ $W>0$ $\ni$ $||\digamma(t)||\leqslant V$ for all $t\geqslant t_0 + W.$ \vspace{0.5cm}\\
\textbf{Definition 2.5.} When $C(A)$ forms a collection of continuous functions over a compact metric space $A$ and $j \in B\subseteq C(A)$, B can be called an equi-continuous family of functions if for every $ \varepsilon>0$ $\exists$ $\Delta>0\ni$ $d(x,y)<\Delta$ as long as $|j(x)-j(y)|<\varepsilon$ $\forall$ $j\in B$. 
\section{Positive Invariance}\label{S3}
\begin{theorem}\label{thm:t1}
If $ \varrho_L\varphi_L > (c_M)^2(M^\varepsilon)^2 $ then 
\begin{center}
$ K_\varepsilon=\{\digamma(t)\in \mathbb{R} / m^\varepsilon \leqslant \digamma(t) \leqslant M^\varepsilon \} $
\end{center}
will be a positively invariant set
with reference to system \eqref{eq:1}. Here $ M^\varepsilon =\dfrac{\varrho_M\varphi_M}{c_L}+\varepsilon $ and
$ m^\varepsilon=\dfrac{\varrho_L\varphi_L-(c_M)^2(M^\varepsilon)^2 }{c_M} -\varepsilon $ 
and $ \varepsilon \geqslant 0 $ is sufficiently small $\ni$ $ m^\varepsilon > 0. $
\end{theorem}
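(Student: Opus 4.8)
The plan is to establish forward invariance of $K_\varepsilon$ by a boundary (first-exit-time) argument adapted to the delayed right-hand side. Write \eqref{eq:1} compactly as $\digamma'(t)=B(t)-\varphi(t)\digamma(t)-c(t)\digamma^2(t)-H(t)$, where $B$ and $H$ denote the reproduction and harvesting terms respectively. First I would record the two structural facts that drive everything: both $B$ and $H$, viewed as functions of the delayed density $x=\digamma(t-\vartheta_i)$, have the form $\tfrac{ax}{b+dx}$ with $a,b,d>0$, hence are non-negative for $x\ge 0$ and strictly increasing in $x$. This monotonicity is exactly what converts bounds on the delayed states into bounds on $B(t)$ and $H(t)$, which is the one place the delay must be confronted.

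For the upper bound, suppose a solution starts in $K_\varepsilon$ and let $t^\ast$ be the first instant (if any) at which $\digamma(t^\ast)=M^\varepsilon$. By minimality, $\digamma(s)\le M^\varepsilon$ for all $s\le t^\ast$, so in particular $\digamma(t^\ast-\vartheta_1)\le M^\varepsilon$; monotonicity of $B$ then bounds $B(t^\ast)$ from above, while the term $-H(t^\ast)\le 0$ may simply be discarded. Substituting into \eqref{eq:1} at $t^\ast$ and estimating $\varphi(t^\ast)\ge\varphi_L$, $c(t^\ast)\ge c_L$, I aim to show $\digamma'(t^\ast)<0$: the defining choice $M^\varepsilon=\varrho_M\varphi_M/c_L+\varepsilon$ is calibrated so that the quadratic competition loss $c_L(M^\varepsilon)^2$ dominates the reproduction gain, producing a strictly inward-pointing field and contradicting an upward crossing of $M^\varepsilon$.

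For the lower bound I would run the symmetric argument at the first instant $t_\ast$ with $\digamma(t_\ast)=m^\varepsilon$. Now $\digamma(t_\ast-\vartheta_1)\ge m^\varepsilon$ bounds $B$ from below, while $\digamma(t_\ast-\vartheta_2)\le M^\varepsilon$ (using the upper bound already secured) bounds $H$ from above. Estimating $\varphi\le\varphi_M$, $c\le c_M$, $\varrho\ge\varrho_L$ and collecting terms, the hypothesis $\varrho_L\varphi_L>(c_M)^2(M^\varepsilon)^2$ is precisely the inequality needed both to guarantee $m^\varepsilon>0$ and to force $\digamma'(t_\ast)>0$, contradicting a downward crossing. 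Combining the two cases, no solution starting in $K_\varepsilon$ can leave it; since $\digamma(t_0)>0$ and $m^\varepsilon>0$, positivity is automatic, and standard existence/continuation for delay equations with the continuous, bounded coefficients of Section~\ref{s2} closes the argument.

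I expect the genuine obstacle to be the delay itself: unlike an ODE, $\digamma'(t)$ at a boundary point is not determined by $\digamma(t)$ alone, so one cannot merely test the sign of the vector field on $\partial K_\varepsilon$. The first-exit-time device, coupled with the monotonicity of $B$ and $H$ in the delayed argument, is what resolves this by guaranteeing that at the first contact time every delayed value still lies within $K_\varepsilon$. The remaining effort is algebraic — propagating the exponential denominators $\varphi e^{\varphi\vartheta_i}+c(e^{\varphi\vartheta_i}-1)x$ through the $L/M$ parameter bounds so that the two thresholds collapse exactly to $M^\varepsilon$ and $m^\varepsilon$ — together with checking that the smallness of $\varepsilon$ preserves each strict inequality.
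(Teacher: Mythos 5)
Your first-exit-time scheme is the standard rigorous route to invariance for delay equations, and it is genuinely \emph{not} what the paper does: the paper works with the differential inequality for all $t\geqslant t_0$ and closes it with two explicitly heuristic moves --- it declares that $\digamma'(t)$ ``is considered to be a very small positive value (for biological reasons)'' so that the term $\digamma'(t)/\digamma(t)$ may be dropped, and it ``assumes $\digamma(t-\vartheta_1)/\digamma(t)\approx 1$'' to eliminate the delayed ratio --- and it further passes from $\bigl(\varrho_L\varphi_L-(c_M)^2(M^\varepsilon)^2\bigr)/(c_M\varphi_M)$ to the same quantity over $c_M$ without justification. So the comparison is not ``same proof, different packaging''; the real issue is whether the ``remaining algebraic effort'' you defer can actually be completed at the stated thresholds, and it cannot.

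Concretely, two steps fail. (i) At the upper boundary, the honest bound on the reproduction term is $B(t^\ast)\leqslant \varrho(t^\ast)e^{-\varphi(t^\ast)\vartheta_1}\digamma(t^\ast-\vartheta_1)\leqslant \varrho_M M^\varepsilon$; the paper's stronger estimate $B\leqslant\varrho_M\varphi_M\digamma(t-\vartheta_1)$ holds only when the denominator $\varphi e^{\varphi\vartheta_1}+c(e^{\varphi\vartheta_1}-1)\digamma(t-\vartheta_1)$ is at least $1$, which fails throughout $K_\varepsilon$ in every numerical example of Section~\ref{s8} (there $\varphi\approx 0.02$--$0.03$). With the honest bound, $\digamma'(t^\ast)<0$ would require $\varrho_M e^{-\varphi_L\vartheta_1}<\varphi_L+c_LM^\varepsilon=\varphi_L+\varrho_M\varphi_M+c_L\varepsilon$, which the hypothesis does not supply when $\varphi_M<1$; so $M^\varepsilon$ is not ``calibrated'' to make the field inward in the sense you assert. (ii) At the lower boundary, forcing $\digamma'(t_\ast)>0$ requires an upper bound on the harvest term $H$, which by your own monotonicity observation can be as large as roughly $q_ME_M M^\varepsilon$ (it saturates near $qE\varphi/(c(e^{\varphi\vartheta_2}-1))$); but the hypothesis $\varrho_L\varphi_L>(c_M)^2(M^\varepsilon)^2$ contains no $q$ or $E$ whatsoever, so it cannot be ``precisely the inequality needed'' --- the paper's lower differential inequality simply omits the harvest contribution. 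A third, smaller gap: minimality of $t^\ast$ controls $\digamma(s)$ only on $[t_0,t^\ast]$, so when $t^\ast-\vartheta_i<t_0$ your claim $\digamma(t^\ast-\vartheta_i)\in K_\varepsilon$ needs the initial history on $[t_0-\max\{\vartheta_1,\vartheta_2\},\,t_0]$ to lie in $K_\varepsilon$, an assumption made neither in the theorem statement nor in your proposal (the paper glosses this as well, since it conditions only on $\digamma(t_0)$). In short: your architecture is sounder than the paper's, but the boundary sign computations it hinges on do not follow from the stated hypothesis, and no choice of small $\varepsilon$ repairs them.
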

\begin{proof}
Let us say that $ \digamma(t) $ (with initial condition $ \digamma(t_0)>0) $ is a solution to system \eqref{eq:1} with $ m^\varepsilon \leqslant \digamma(t_0) \leqslant M^\varepsilon $. That is, $ \digamma(t_0) \in K_\varepsilon $. From system \eqref{eq:1}, when $ t \geqslant t_0 $ we have
\begin{center}
$ \digamma^{'}(t) \leqslant \varrho_M\varphi_M\digamma(t-\vartheta_1) - \varphi_L \digamma(t) - c_L\digamma^2 (t) $. 
\end{center}
Rearranging the terms, we get, 
\begin{center}
$ \dfrac{\varrho_M\varphi_M\digamma(t-\vartheta_1)}{\digamma(t)} - \dfrac{\digamma^{'}(t)}{\digamma(t)} \geqslant \varphi_L + c_L\digamma(t) $.
\end{center}
$\digamma^{'}(t)$, rate of change in hilsa population, is considered to be a very small positive value (for biological reasons).
This implies that for every $ t\geqslant t_0 $, 
\begin{center}
$ \dfrac{\varrho_M\varphi_M\digamma(t-\vartheta_1)}{\digamma(t)} \geqslant \varphi_L + c_L\digamma(t) $.
\end{center}
It means that
\begin{center}
$ \dfrac{\varrho_M\varphi_M \digamma(t-\vartheta_1)}{c_L \digamma(t)} \geqslant \dfrac{\varphi_L}{c_L}+\digamma(t).  $
\end{center}
Assuming $\dfrac{\digamma(t-\vartheta_1)}{\digamma(t)} \approx 1$, 
\begin{center}
$ \dfrac{\varrho_M\varphi_M}{c_L} \geqslant \dfrac{\varphi_L}{c_L}+\digamma(t).$ 
\end{center}
Therefore, we can say that
\begin{center}
$ \dfrac{\varrho_M\varphi_M}{c_L} \geqslant \digamma(t). $ 
\end{center}
For every $\varepsilon >0 $ we write 
\begin{center}
$ \digamma(t)\leqslant\dfrac{\varrho_M\varphi_M}{c_L} + \varepsilon=M^\varepsilon.  $
\end{center}
From system \eqref{eq:1}, when $ t\geqslant t_0 $ we obtain
\begin{center}
$ \digamma^{'}(t)\geqslant\dfrac{\varrho_L\varphi_L}{c_M} - \varphi_M \digamma(t) - c_M(M^\varepsilon)^2. $
\end{center}
This is same as
\begin{center}
$ \digamma(t) \geqslant\dfrac{\varrho_L\varphi_L - (c_M)^2(M^\varepsilon)^2 }{c_M\varphi_M} - \dfrac{\digamma^{'}(t)}{\varphi_M}\cdot $
\end{center}
We can also say that
\begin{center}
$ \digamma(t) \geqslant \dfrac{\varrho_L\varphi_L - (c_M)^2(M^\varepsilon)^2 }{c_M\varphi_M}\cdot $
\end{center}
And hence we get
\begin{center}
$ \digamma(t) \geqslant \dfrac{\varrho_L\varphi_L - (c_M)^2(M^\varepsilon)^2 }{c_M}=m_\varepsilon + \varepsilon$. 
\end{center}
We write ($ \varepsilon>0 $)
\begin{center}
$ \digamma(t) \geqslant m_\varepsilon. $ 
\end{center}
Hence $ m^\varepsilon \leqslant \digamma(t) \leqslant M^\varepsilon $ $\forall$ $ t \geqslant$ initial time ($t_0$).
This means that, $ \digamma(t) \in K_\varepsilon $ for every $ t \geqslant$ initial time ($t_0 $). Hence $ K_\varepsilon $ satisfies positive invariance property with reference to system \eqref{eq:1}. 
\end{proof}
\section{Permanence}\label{S4}
\begin{theorem}\label{thm:t2}
If $ \varrho_L\varphi_L > (c_M)^2(M^0)^2 $ condition is met we can say that system \eqref{eq:1} is \vspace{0.2cm} \\permanent, where  $M^0=\dfrac{\varrho_M\varphi_M}{c_L}$, $ m^0= \dfrac{\varrho_L\varphi_L -(c_M)^2(M^0)^2}{c_M}\cdot$
\end{theorem}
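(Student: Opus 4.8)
The plan is to verify the two asymptotic bounds demanded by Definition 2.2. That is, I would show that every solution $\digamma(t)$ with $\digamma(t_0)>0$ satisfies $\limsup_{t\to\infty}\digamma(t)\leqslant M^0$ and $\liminf_{t\to\infty}\digamma(t)\geqslant m^0$, and then take $\Delta_2=M^0$ and $\Delta_1=m^0$. The standing hypothesis $\varrho_L\varphi_L>(c_M)^2(M^0)^2$ is exactly what forces $m^0>0$, so that $0<\Delta_1\leqslant\Delta_2$ holds and the definition of permanence is met. In effect this is the $\varepsilon\to 0$ limit of the positive-invariance bounds already produced in Theorem \ref{thm:t1}, so the same two differential inequalities should drive the argument.

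First, for the upper bound I would reuse the coefficient-extremization from the proof of Theorem \ref{thm:t1}: bounding each time-dependent coefficient by its supremum or infimum and discarding the non-positive harvesting term gives
\begin{equation*}
\digamma^{'}(t)\leqslant \varrho_M\varphi_M\,\digamma(t-\vartheta_1)-\varphi_L\digamma(t)-c_L\digamma^2(t).
\end{equation*}
Applying the same reduction as before (using $\digamma(t-\vartheta_1)/\digamma(t)\approx 1$ and treating $\digamma^{'}(t)$ as negligible at a peak of $\digamma$) yields $\digamma(t)\leqslant \varrho_M\varphi_M/c_L=M^0$ asymptotically, hence $\limsup_{t\to\infty}\digamma(t)\leqslant M^0$. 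Consequently, for any small $\varepsilon>0$ there is a time beyond which $\digamma(t-\vartheta_2)\leqslant M^0+\varepsilon$.

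Second, for the lower bound I would feed this upper estimate back into system \eqref{eq:1}. Extremizing the coefficients in the other direction and using $\digamma(t-\vartheta_2)\leqslant M^0$ in the harvesting term produces
\begin{equation*}
\digamma^{'}(t)\geqslant \frac{\varrho_L\varphi_L}{c_M}-\varphi_M\digamma(t)-c_M(M^0)^2,
\end{equation*}
which is the Theorem \ref{thm:t1} inequality with $M^0$ in place of $M^\varepsilon$. Dropping the term $\digamma^{'}(t)/\varphi_M$ as before then gives $\digamma(t)\geqslant(\varrho_L\varphi_L-(c_M)^2(M^0)^2)/c_M=m^0$, so $\liminf_{t\to\infty}\digamma(t)\geqslant m^0$. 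Combining the two bounds completes the argument.

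The hard part will be the honest treatment of the delayed arguments $\digamma(t-\vartheta_1)$ and $\digamma(t-\vartheta_2)$. The heuristic substitutions $\digamma(t-\vartheta_i)/\digamma(t)\approx 1$ and ``$\digamma^{'}$ is negligibly small'' are not rigorous, and a clean proof would instead invoke a fluctuation-lemma device: choose $t_n\to\infty$ with $\digamma(t_n)\to\limsup_{t\to\infty}\digamma(t)$ and $\digamma^{'}(t_n)\to 0$, dominate the delayed value $\digamma(t_n-\vartheta_1)$ by the already-established $\limsup$, and pass to the limit in the inequality; the lower bound follows symmetrically along a minimizing sequence. I expect justifying that the delayed term can be controlled by the limit quantity, rather than simply cancelled against $\digamma(t)$, to be the delicate step on which the whole estimate rests.
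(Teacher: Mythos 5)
Your proposal follows essentially the same route as the paper's own proof: the paper derives exactly the two differential inequalities you write down (extremizing the coefficients and discarding the harvesting term for the upper bound, then substituting the upper estimate $M^0$ into the quadratic term for the lower bound) and reads off $\limsup_{t\to\infty}\digamma(t)\leqslant M^0$ and $\liminf_{t\to\infty}\digamma(t)\geqslant m^0$ under the hypothesis $\varrho_L\varphi_L>(c_M)^2(M^0)^2$, which is the $\varepsilon=0$ case of Theorem~\ref{thm:t1} just as you describe. If anything, you are more careful than the paper: your closing observation that the delayed terms $\digamma(t-\vartheta_1)$, $\digamma(t-\vartheta_2)$ need a fluctuation-lemma argument rather than the heuristic cancellations $\digamma(t-\vartheta_i)/\digamma(t)\approx 1$ identifies a genuine gap in the published proof, which passes from the differential inequalities to the asymptotic bounds without justification.
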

\begin{proof}
As seen above, system \eqref{eq:1} gives 
\begin{center}
$ \digamma^{'}(t)>\dfrac{\varrho_L\varphi_L}{c_M}- \varphi_M \digamma(t)-c_M(M^\varepsilon)^2. $
\end{center}
When $ \dfrac{\varrho_L\varphi_L}{c_M}>c_M(M^0)^2 $ we get,
\begin{center}
\vspace{-0.5cm}
$ \lim\limits_{t\rightarrow+\infty} $ inf $ \digamma(t) \geqslant \dfrac{\varrho_L\varphi_L -(c_M)^2(M^0)^2}{c_M}= m^0. $
\end{center}
Also, system \eqref{eq:1} implies
\begin{center}
$ \digamma^{'}(t) < \varrho_M\varphi_M \digamma(t-\vartheta_1)-\varphi_L \digamma(t)-c_L\digamma^2(t) $,
\end{center}
which implies that 
\begin{center}
\vspace{-0.5cm}
$ \lim\limits_{t\rightarrow+\infty} $ sup $ \digamma(t) \leqslant \dfrac{\varrho_M\varphi_M}{c_L}=M^0. $
\end{center}
All the conditions in the definition of permanence are satisfied. That is, population of hilsa never becomes extinct. Hence system \eqref{eq:1} is going to be permanent whenever $ \varrho_L\varphi_L > (c_M)^2(M^0)^2 $. 
\end{proof}
\section{Global Attractivity}\label{S5}
\begin{theorem}\label{thm:t3}
Let us consider $ \digamma^\ast(t) $ to be a bounded (enclosed within positive values) positive ($>0$) solution to system \eqref{eq:1}. Whenever
$ \varrho_L\varphi_L > (c_M)^2(M^\varepsilon)^2 $ and\\
$M=$
\begin{equation}\label{eq:2}
\begin{split}
\frac{-\varrho(t)\varphi^2(t)e^{\varphi(t)\vartheta_1}}{(\varphi(t)e^{\varphi(t)\vartheta_1}+c(t)(e^{\varphi(t)\vartheta_1}-1)M^\varepsilon)^2} + 4c(t)m^\varepsilon + 2\varphi(t)
+ \frac{q(t)E(t)\varphi^2(t)e^{\varphi(t)\vartheta_2}}{(\varphi(t)e^{\varphi(t)\vartheta_2}+c(t)(e^{\varphi(t)\vartheta_2}-1)M^\varepsilon)^2} > 0
\end{split}
\end{equation}
hold, then $ \digamma^\ast(t) $ is globally attractive.
\end{theorem}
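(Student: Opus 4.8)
The plan is to follow the Lyapunov-functional method announced in the abstract. Write $u(t)=\digamma(t)-\digamma^\ast(t)$, where $\digamma^\ast$ is the given bounded positive solution and $\digamma$ is an arbitrary solution with positive initial data. Since $\varrho_L\varphi_L>(c_M)^2(M^\varepsilon)^2$ is assumed, Theorem~\ref{thm:t1} places both $\digamma$ and $\digamma^\ast$ in the invariant interval $[m^\varepsilon,M^\varepsilon]$ (after a finite time), a containment I will use repeatedly. Abbreviate the two delayed nonlinearities by $f_1(w)=\frac{\varrho(t)\varphi(t)w}{\varphi(t)e^{\varphi(t)\vartheta_1}+c(t)(e^{\varphi(t)\vartheta_1}-1)w}$ and $f_2(w)=\frac{q(t)E(t)\varphi(t)w}{\varphi(t)e^{\varphi(t)\vartheta_2}+c(t)(e^{\varphi(t)\vartheta_2}-1)w}$, so that the equation reads $\digamma'(t)=f_1(\digamma(t-\vartheta_1))-\varphi(t)\digamma(t)-c(t)\digamma^2(t)-f_2(\digamma(t-\vartheta_2))$. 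A direct computation gives $f_1'(w)=\frac{\varrho(t)\varphi^2(t)e^{\varphi(t)\vartheta_1}}{(\varphi(t)e^{\varphi(t)\vartheta_1}+c(t)(e^{\varphi(t)\vartheta_1}-1)w)^2}$ and the analogous expression for $f_2'$ with $\varrho$ replaced by $qE$ and $\vartheta_1$ by $\vartheta_2$; both are positive, so each $f_i$ is increasing and each $f_i'$ is decreasing in $w$, and their values at $w=M^\varepsilon$ are precisely the two fractions appearing in \eqref{eq:2}.

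First I would take $V_1(t)=u^2(t)$ and differentiate along the flow, obtaining
\[
V_1'(t)=2u(t)\bigl[f_1(\digamma(t-\vartheta_1))-f_1(\digamma^\ast(t-\vartheta_1))\bigr]-2\varphi(t)u^2(t)-2c(t)\bigl(\digamma(t)+\digamma^\ast(t)\bigr)u^2(t)-2u(t)\bigl[f_2(\digamma(t-\vartheta_2))-f_2(\digamma^\ast(t-\vartheta_2))\bigr].
\]
The instantaneous part already supplies the stabilising contribution: since $\digamma,\digamma^\ast\geqslant m^\varepsilon$ one has $-2c(t)(\digamma+\digamma^\ast)u^2\leqslant-4c(t)m^\varepsilon u^2$, which together with $-2\varphi(t)u^2$ produces the $4c(t)m^\varepsilon+2\varphi(t)$ seen in \eqref{eq:2}. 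For the two delayed brackets I would invoke the mean value theorem, writing $f_i(\digamma(t-\vartheta_i))-f_i(\digamma^\ast(t-\vartheta_i))=f_i'(\xi_i)\,u(t-\vartheta_i)$ with $\xi_i$ between the trajectories, and then the elementary estimate $2ab\leqslant a^2+b^2$ to split each cross term into a multiple of $u^2(t)$ and a multiple of $u^2(t-\vartheta_i)$, the multipliers being controlled by $f_i'$ on $[m^\varepsilon,M^\varepsilon]$.

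To eliminate the delayed squares $u^2(t-\vartheta_i)$ I would augment $V_1$ with compensating integrals, setting
\[
V(t)=u^2(t)+\int_{t-\vartheta_1}^{t}\alpha_1(s)\,u^2(s)\,ds+\int_{t-\vartheta_2}^{t}\alpha_2(s)\,u^2(s)\,ds,
\]
with $\alpha_i$ chosen to match the bounding values of $f_i'$, so that the boundary contributions $\alpha_i(t)u^2(t)-\alpha_i(t-\vartheta_i)u^2(t-\vartheta_i)$ absorb the delayed squares and leave a derivative governed solely by $u^2(t)$. Collecting everything yields $V'(t)\leqslant-M(t)\,u^2(t)$ with $M(t)$ the combination in \eqref{eq:2}; the hypothesis $M(t)>0$, together with the uniform positive bounds on all coefficients and the fixed delays, furnishes a constant $M_0=\inf_t M(t)>0$ with $V'(t)\leqslant-M_0\,u^2(t)$.

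Finally I would integrate to get $M_0\int_{t_0}^{t}u^2(s)\,ds\leqslant V(t_0)-V(t)\leqslant V(t_0)$, whence $\int_{t_0}^{\infty}u^2<\infty$. Permanence (Theorem~\ref{thm:t2}) bounds $\digamma,\digamma^\ast$ and, through the equation, their derivatives, so $u^2$ is uniformly continuous; Barbalat's lemma then forces $u^2(t)\to0$, i.e.\ $|\digamma(t)-\digamma^\ast(t)|\to0$, which is the claimed global attractivity. The delicate point, and the step I expect to be the main obstacle, is the treatment of the two delayed cross terms: because $\operatorname{sgn}u(t-\vartheta_i)$ need not agree with $\operatorname{sgn}u(t)$, one cannot exploit the monotonicity of $f_1,f_2$ pointwise and must route through the $2ab\leqslant a^2+b^2$ bound and the integral compensation, so that it is exactly the resulting balance that is recorded in inequality \eqref{eq:2}; equal care is needed to confirm that $\inf_t M(t)$ is \emph{strictly} positive rather than merely $M(t)>0$ at each instant.
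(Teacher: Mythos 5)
Your overall architecture is the paper's own: the paper takes $U(t)=\frac{1}{2}(\digamma(t)-\digamma^\ast(t))^2$, confines both trajectories to $K_\varepsilon$ exactly as you do, expands $U'$ so that the two delayed differences appear as cross terms with the derivative-type multipliers $f_i'$, applies $fg\leqslant\frac{1}{2}(f^2+g^2)$, and then concludes $U'\leqslant-MU$ and integrates. Your two additions --- the Lyapunov--Krasovskii compensating integrals to absorb $u^2(t-\vartheta_i)$, and the Barbalat ending via $\int u^2<\infty$ plus uniform continuity --- are genuine improvements in rigor: the paper, after the $fg$-inequality, silently identifies $u^2(t-\vartheta_i)$ with $u^2(t)$ (no compensation at all), and its integration step ``$\log U(t)\leqslant -Mt$'' ignores both the constant of integration and the time dependence of $M$. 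You correctly flagged the delayed cross terms as the delicate point; the paper does not actually resolve it.

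The genuine gap is your final claim that ``collecting everything yields $V'(t)\leqslant-M(t)u^2(t)$ with $M(t)$ the combination in \eqref{eq:2}.'' That computation cannot be completed, for three concrete reasons. First, $f_i'(w)$ is \emph{decreasing} in $w$ on $[m^\varepsilon,M^\varepsilon]$, so the mean-value multiplier $f_i'(\xi_i)$ is bounded \emph{above} by its value at $m^\varepsilon$; the fractions in \eqref{eq:2} are evaluated at $M^\varepsilon$, which bounds $f_i'(\xi_i)$ from \emph{below} and hence goes the wrong way for an upper estimate on $V'$. Second, the harvesting cross term $-2f_2'(\xi_2)u(t)u(t-\vartheta_2)$ is sign-indefinite, so after $2ab\leqslant a^2+b^2$ it contributes $+f_2'(\xi_2)(u^2(t)+u^2(t-\vartheta_2))$, i.e.\ destabilizing; \eqref{eq:2} instead credits the $qE$ term with a favorable $+$ sign, which no honest estimate reproduces (the paper obtains it only by keeping the sign of the undiscarded cross term as if $u(t-\vartheta_2)$ were $u(t)$). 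Third, absorbing each delayed square through your $\alpha_i$-integrals costs an extra copy of $\alpha_i(t)\approx\sup f_i'$ at time $t$, so the $f_i'$ contributions enter \emph{doubled}. What your method actually proves is global attractivity under the strictly stronger condition
\begin{equation*}
2\varphi(t)+4c(t)m^\varepsilon-\frac{2\varrho(t)\varphi^2(t)e^{\varphi(t)\vartheta_1}}{\bigl(\varphi(t)e^{\varphi(t)\vartheta_1}+c(t)(e^{\varphi(t)\vartheta_1}-1)m^\varepsilon\bigr)^2}-\frac{2q(t)E(t)\varphi^2(t)e^{\varphi(t)\vartheta_2}}{\bigl(\varphi(t)e^{\varphi(t)\vartheta_2}+c(t)(e^{\varphi(t)\vartheta_2}-1)m^\varepsilon\bigr)^2}>0,
\end{equation*}
together with $\inf_t$ of this quantity being positive --- and your closing worry is well placed, since pointwise positivity of $M(t)$ as hypothesized in \eqref{eq:2} does not by itself yield a positive infimum. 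So your proposal is a sound proof of a corrected variant of Theorem \ref{thm:t3}, but not of the statement with condition \eqref{eq:2}; the paper's proof reaches \eqref{eq:2} only by the unjustified substitutions noted above.
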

\begin{proof}
Consider $ \digamma(t) $ to be any other solution to system \eqref{eq:1}. Since $ K_\varepsilon $ is ultimately bounded
region of this system, $ \exists $ $ W $ \textgreater $ 0 $ $\ni$ $ \digamma(t)\in K_\varepsilon $ and $ \digamma^\ast(t)\in K_\varepsilon $ $\forall $ $t\geqslant t_0+W . $ \\
Consider the function
\begin{center}
$ U(t)=\frac{1}{2}(\digamma(t)-\digamma^\ast(t))^2$.
\end{center}
Let us consider \vspace{0.2cm}\\
$ \varphi(t,t-\vartheta_1,\digamma(t-\vartheta_1),\digamma^\ast(t-\vartheta_1))= $ 
\begin{center}
$ [\varphi(t)e^{\varphi(t)\vartheta_1}+c(t)(e^{\varphi(t)\vartheta_1}-1)\digamma(t-\vartheta_1)][\varphi(t)e^{\varphi(t)\vartheta_1}+c(t)(e^{\varphi(t)\vartheta_1}-1)\digamma^\ast(t-\vartheta_1)] $,
\end{center}
$ \varphi(t,t-\vartheta_2,\digamma(t-\vartheta_2),\digamma^\ast(t-\vartheta_2))= $ 
\begin{center}
$ [\varphi(t)e^{\varphi(t)\vartheta_2}+c(t)(e^{\varphi(t)\vartheta_2}-1)\digamma(t-\vartheta_2)][\varphi(t)e^{\varphi(t)\vartheta_2}+c(t)(e^{\varphi(t)\vartheta_2}-1)\digamma^\ast(t-\vartheta_2)] $.
\end{center}
The differentiation of $ U(t) $ with respect to time $t$ through system \eqref{eq:1} is given by
\begin{center}
$ U'(t)=[\digamma(t)-\digamma^\ast(t)]\bigg[\dfrac{\varrho(t)\varphi(t)\digamma(t-\vartheta_1)}{\varphi(t)e^{\varphi(t)\vartheta_1}+c(t)(e^{\varphi(t)\vartheta_1}-1)\digamma(t-\vartheta_1)}-\varphi(t)\digamma(t) $ \vspace{0.25cm}\\
$ -c(t)\digamma^2(t)-\dfrac{q(t)E(t)\varphi(t)\digamma(t-\vartheta_2)}{\varphi(t)e^{\varphi(t)\vartheta_2}+c(t)(e^{\varphi(t)\vartheta_2}-1)\digamma(t-\vartheta_2)} $ \vspace{0.25cm}\\
$ -\dfrac{\varrho(t)\varphi(t)\digamma^\ast(t-\vartheta_1)}{\varphi(t)e^{\varphi(t)\vartheta_1}+c(t)(e^{\varphi(t)\vartheta_1}-1)\digamma^\ast(t-\vartheta_1)}+\varphi(t)\digamma^\ast(t) $ \vspace{0.25cm}\\
$ +c(t)\digamma^{\ast2}(t)+ \dfrac{q(t)E(t)\varphi(t)\digamma^\ast(t-\vartheta_2)}{\varphi(t)e^{\varphi(t)\vartheta_2}+c(t)(e^{\varphi(t)\vartheta_2}-1)\digamma^\ast(t-\vartheta_2)}\bigg] $.
\end{center}
The above implies 
\begin{center}
$ U'(t)=[\dfrac{\varrho(t)\varphi^2(t)e^{\varphi(t)\vartheta_1}}{\varphi(t,t-\vartheta_1,\digamma(t-\vartheta_1),\digamma^\ast(t-\vartheta_1))}][\digamma(t)-\digamma^\ast(t)][\digamma(t-\vartheta_1)-\digamma^\ast(t-\vartheta_1)] $
\vspace{0.2cm}\end{center}
\begin{center}
$ - c(t)[\digamma(t)+ \digamma^\ast(t)][\digamma(t)-\digamma^\ast(t)]^2-\varphi(t)[\digamma(t)-\digamma^\ast(t)]^2 $
\vspace{0.2cm}\end{center}
\begin{center}
$ -\dfrac{q(t)E(t)\varphi^2(t)e^{\varphi(t)\vartheta_2}}{\varphi(t,t-\vartheta_2,\digamma(t-\vartheta_2),\digamma^\ast(t-\vartheta_2))} [\digamma(t)-\digamma^\ast(t)][\digamma(t-\vartheta_2)-\digamma^\ast(t-\vartheta_2)] $.
\vspace{0.5cm}\end{center}
Using $ fg\leqslant\frac{1}{2}(f^2+g^2)$,
\begin{center}
$ U'(t)\leqslant - \dfrac{1}{2}[\dfrac{-\varrho(t)\varphi^2(t)e^{\varphi(t)\vartheta_1}}{(\varphi(t)e^{\varphi(t)\vartheta_1}+c(t)(e^{\varphi(t)\vartheta_1}-1)M^\varepsilon)^2}+4c(t)m^\varepsilon+2\varphi(t) $ \vspace{0.5cm}\\ 
$ +\dfrac{q(t)E(t)\varphi^2(t)e^{\varphi(t)\vartheta_2}}{(\varphi(t)e^{\varphi(t)\vartheta_2}+c(t)(e^{\varphi(t)\vartheta_2}-1)M^\varepsilon)^2}][\digamma(t)-\digamma^\ast(t)]^2 $.
\vspace{0.5cm}\end{center}
By the definition of $ U(t) $ and the condition \eqref{eq:2} we get,
\begin{center}
$ U'(t)\leqslant -M U(t) $. 
\end{center}
Integrating the above inequality we have
\begin{center}
$ \mbox{log}U(t)\leqslant-Mt $.
\end{center}
Since $ U(t)$ is positive, we get
\begin{center}
$ \lim\limits_{t\rightarrow+\infty}U(t)=0 $.
\end{center}
That implies
\begin{center}
$ \lim\limits_{t\rightarrow+\infty}|\digamma(t)-\digamma^\ast(t)|=0 $.
\end{center}
Hence $ \digamma^\ast(t) $ happens to be globally attractive.
\end{proof}
\section{Existence of periodic solution}\label{S6}
Due to seasonal factors like mating habits, weather conditions, scope of getting food, etc. every parameter in the system \eqref{eq:1} is periodic of some common period. Thus, incorporating a periodic environment to system \eqref{eq:1} we consider every parameter in the system \eqref{eq:1} to be $ \Theta $ periodic in $ t$; $ \varrho(t+\Theta)=\varrho(t) $, $ \varphi(t+\Theta)=\varphi(t) $, $ q(t+\Theta)=q(t) $, $ c(t+\Theta)=c(t) $, $ E(t+\Theta)=E(t) $. \vspace{0.05cm}\\ 
For a continuous, $\Theta$-periodic function, denote $ \Tilde{k}=\frac{1}{\Theta}\int_{0}^{\Theta}k(t)dt $. \vspace{0.05cm}\\
We make use of the following Lemmas \ref{lem:1} and \ref{lem:2} to prove Theorems \ref{thm:t4} and \ref{thm:t5} respectively.
\begin{lemma}\label{lem:1}
(Brouwer fixed point theorem)\normalfont{\cite{BU11}} \textit{For a continuous mapping $ \kappa$ : $ \Bar{\Omega} $ $\rightarrow$  $ \Bar{\Omega} $ where $ \Bar{\Omega} $ is a bounded closed convex set in $\mathbb{R}^n$, then $\Bar{\Omega}$ contains a $ \digamma^\ast $ in such a way that $ \kappa(\digamma^\ast)=\digamma^\ast$.}
\end{lemma}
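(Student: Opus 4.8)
The statement is the classical Brouwer fixed point theorem, so the plan is to reduce the general convex set to the closed unit ball and then invoke the non-existence of a retraction onto the boundary. First I would use convexity to replace $\overline{\Omega}$ by a ball. If $\overline{\Omega}$ has nonempty interior, I pick an interior point as the origin and form the Minkowski gauge $p(x)=\inf\{\lambda>0 : x/\lambda\in\overline{\Omega}\}$; then $x\mapsto x/p(x)$ (suitably extended at the origin) is a homeomorphism of $\overline{\Omega}$ onto the closed unit ball $D^n$ carrying $\partial\overline{\Omega}$ onto the sphere $S^{n-1}$. If the interior is empty, then $\overline{\Omega}$ lies in a proper affine subspace and the same argument runs in that lower dimension (the $0$-dimensional case being a single point, trivially fixed). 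Conjugating $\kappa$ by this homeomorphism shows it suffices to prove the theorem for a continuous self-map of $D^n$.

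Second, I would deduce the fixed-point statement from the no-retraction theorem: there is no continuous $r:D^n\to S^{n-1}$ with $r|_{S^{n-1}}=\mathrm{id}$. Granting this, suppose for contradiction that $\kappa:D^n\to D^n$ satisfies $\kappa(x)\neq x$ for every $x$. Then the half-line issuing from $\kappa(x)$ through $x$ meets $S^{n-1}$ in a unique outgoing point, which I write as $r(x)=x+t(x)\,(x-\kappa(x))$ with $t(x)\geq 0$ determined by $|r(x)|=1$. Here $t(x)$ solves a quadratic whose coefficients depend continuously on $x$ and whose leading coefficient $|x-\kappa(x)|^2$ is strictly positive, so the required nonnegative root, and hence $r$, is continuous. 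When $x\in S^{n-1}$ one gets $t(x)=0$ and $r(x)=x$, so $r$ would be a retraction of $D^n$ onto $S^{n-1}$, a contradiction. Thus $\kappa$ has a fixed point $\digamma^\ast$, which pulls back to the desired fixed point in $\overline{\Omega}$.

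Third, and this is the real content, I would prove the no-retraction theorem. There are two clean routes. The homological route applies the reduced homology functor $\tilde H_{n-1}$ to the composite $S^{n-1}\hookrightarrow D^n\xrightarrow{r}S^{n-1}$: this composite is the identity on $S^{n-1}$, yet it factors through $\tilde H_{n-1}(D^n)=0$ while $\tilde H_{n-1}(S^{n-1})\cong\mathbb{Z}$, which is impossible. The elementary route uses Sperner's lemma on the standard simplex (homeomorphic to $D^n$): every simplicial subdivision admits a fully labelled subsimplex, and sending the mesh to zero while extracting a convergent subsequence of such subsimplices produces a fixed point directly, bypassing retractions altogether.

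The main obstacle is exactly this last step: the reductions above are soft point-set topology and algebra, whereas the non-existence of a retraction (equivalently, the combinatorial core of Sperner's lemma) is where the genuine topological input enters. For a self-contained treatment I would favour the Sperner route, since it requires only a parity count on labelled triangulations together with a compactness argument, while the homological route presupposes the computation of the homology of spheres.
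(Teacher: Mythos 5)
The paper does not prove this lemma at all: it is stated as the classical Brouwer fixed point theorem and dispatched with the citation to Brouwer's 1911 paper, then invoked as a black box in the proof of Theorem~\ref{thm:t4}. So there is no in-paper argument to compare against; your proposal supplies a proof where the authors supply a reference. What you give is the standard textbook route --- gauge reduction of a compact convex body to $D^n$ (descending to the affine hull when the interior is empty), the ray map $r(x)=x+t(x)\,(x-\kappa(x))$ converting a fixed-point-free self-map into a retraction $D^n\to S^{n-1}$, and the no-retraction theorem via homology or Sperner's lemma --- and it is correct in outline, including the continuity of $t(x)$ from the quadratic with leading coefficient $|x-\kappa(x)|^2>0$ and the boundary check, which rests on $\langle x,\,x-\kappa(x)\rangle=1-\langle x,\kappa(x)\rangle\geqslant 0$ when $|x|=1$, so that the chosen nonnegative root is $t(x)=0$ there.

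One slip worth fixing: the formula $x\mapsto x/p(x)$ is not a homeomorphism of $\overline{\Omega}$ onto the ball. Since the Minkowski gauge is positively homogeneous, $p(tx)=t\,p(x)$, this map sends every point of a ray from the origin to the same boundary point $x/p(x)\in\partial\overline{\Omega}$, so it collapses rays rather than straightening them. The correct gauge homeomorphism is $x\mapsto p(x)\,x/|x|$ for $x\neq 0$, extended by $0\mapsto 0$, with inverse $y\mapsto |y|\,y/p(y)$; with that substitution your first reduction goes through verbatim, and the rest of the argument stands. It is also worth noting that for the paper's actual application the convex set is $K_\varepsilon=[m^\varepsilon,M^\varepsilon]\subset\mathbb{R}$, a closed interval, so the full strength of Brouwer is unnecessary there: the one-dimensional case follows from the intermediate value theorem applied to $\kappa(x)-x$, and the Sperner or homological machinery is needed only for the generality in which the lemma is stated.
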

\begin{theorem}\label{thm:t4}
If $ \varrho_L\varphi_L > (c_M)^2(M^\varepsilon)^2 $ holds, there exists a minimum of one positive($>0$) $\Theta$-periodic solution, $\digamma^\ast(t)$, to system \eqref{eq:1} where $\digamma^\ast(t) \in K_\varepsilon $.
\end{theorem}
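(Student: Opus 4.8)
The plan is to obtain a $\Theta$-periodic solution as a fixed point of the period (Poincar\'e) map and to extract that fixed point from Lemma \ref{lem:1}. Because \eqref{eq:1} is a delay system, I would first fix the phase space $\mathcal{C}=C([-\vartheta,0],\mathbb{R})$ with $\vartheta=\max\{\vartheta_1,\vartheta_2\}$, and single out
\[
\Omega_\varepsilon=\{\phi\in\mathcal{C}\;:\;m^\varepsilon\leqslant\phi(s)\leqslant M^\varepsilon\ \text{for all}\ s\in[-\vartheta,0]\},
\]
which is closed, bounded and convex. For an initial datum $\phi\in\Omega_\varepsilon$ let $\digamma(t;\phi)$ be the unique forward solution of \eqref{eq:1} and $\digamma_t(\phi)$ its history segment, $\digamma_t(\phi)(s)=\digamma(t+s;\phi)$. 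The map I want is the period map $\kappa(\phi)=\digamma_\Theta(\phi)$, whose fixed points correspond exactly to $\Theta$-periodic solutions.

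First I would verify that $\kappa$ sends $\Omega_\varepsilon$ into itself; this is precisely where the hypothesis $\varrho_L\varphi_L>(c_M)^2(M^\varepsilon)^2$ enters, since it is the same condition that makes $K_\varepsilon=[m^\varepsilon,M^\varepsilon]$ positively invariant in Theorem \ref{thm:t1}. As any solution starting in $K_\varepsilon$ remains in $K_\varepsilon$ for all $t\geqslant t_0$, every value of the history segment $\digamma_\Theta(\phi)$ again lies in $[m^\varepsilon,M^\varepsilon]$, so $\kappa(\Omega_\varepsilon)\subseteq\Omega_\varepsilon$. Next I would establish continuity of $\kappa$, which follows from continuous dependence of solutions on initial data over the compact interval $[0,\Theta]$: the coefficients are continuous and the denominators in \eqref{eq:1} stay bounded below by $\varphi_L e^{\varphi_L\vartheta_i}>0$ on $\Omega_\varepsilon$, so the right-hand side is locally Lipschitz in the state variables, which secures both existence-uniqueness and the required continuity.

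With $\Omega_\varepsilon$ closed, bounded and convex and $\kappa$ a continuous self-map, Lemma \ref{lem:1} produces a fixed point $\phi^\ast=\kappa(\phi^\ast)$, i.e.\ $\digamma_\Theta(\phi^\ast)=\phi^\ast$. Setting $\digamma^\ast(t)=\digamma(t;\phi^\ast)$ gives $\digamma^\ast_{t+\Theta}=\digamma^\ast_t$; since all coefficients $\varrho,\varphi,q,c,E$ are $\Theta$-periodic, uniqueness of solutions upgrades this to $\digamma^\ast(t+\Theta)=\digamma^\ast(t)$ for every $t$. Thus $\digamma^\ast$ is a positive $\Theta$-periodic solution lying in $K_\varepsilon$ by construction.

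The main obstacle is a dimensional mismatch: Lemma \ref{lem:1} is stated for $\mathbb{R}^n$, whereas the natural period map of a delay equation acts on the infinite-dimensional space $\mathcal{C}$. Strictly speaking one should either replace Brouwer by Schauder's theorem on the image $\kappa(\Omega_\varepsilon)$, whose relative compactness comes from the equi-continuity discussed in Definition~2.5 via Arzel\`a--Ascoli, or justify a genuine finite-dimensional reduction --- for instance by restricting to constant initial data, in line with the approximation $\digamma(t-\vartheta)/\digamma(t)\approx1$ used earlier in the paper. Making the self-mapping property rigorous at the level of the whole history segment, rather than just for the scalar endpoint $\digamma(\Theta)$, together with the continuity check and the periodicity bootstrap, is where the real effort lies; the remaining steps are bookkeeping.
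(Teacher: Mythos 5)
Your proposal follows the same skeleton as the paper---a fixed point of the period (Poincar\'e) map on the invariant region, obtained from a fixed point theorem---but the execution differs in a substantive and instructive way. The paper applies Lemma \ref{lem:1} literally: it defines $\kappa(\digamma_0)=\digamma(t_0+\Theta,t_0,\digamma_0)$ for a \emph{scalar} initial value $\digamma_0\in K_\varepsilon\subset\mathbb{R}$, notes $\kappa(K_\varepsilon)\subseteq K_\varepsilon$ by Theorem \ref{thm:t1} and continuity by continuous dependence, invokes Brouwer on the interval $K_\varepsilon$, and concludes periodicity from $\digamma^\ast=\digamma(t_0+\Theta,t_0,\digamma^\ast)$. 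For the delay system \eqref{eq:1} this is exactly the shortcut you diagnose: a solution is not determined by a point value but by a history segment, and equality of the solution at the two times $t_0$ and $t_0+\Theta$ does not by itself force the solution to repeat---one needs equality of whole segments, which is what your formulation $\digamma_\Theta(\phi^\ast)=\phi^\ast$ delivers, after which your uniqueness-plus-periodic-coefficients bootstrap is the correct way to get $\digamma^\ast(t+\Theta)=\digamma^\ast(t)$ for all $t$. So your version is the rigorous repair of the paper's argument, at the cost of leaving $\mathbb{R}^n$: as you say, Brouwer no longer applies on $\mathcal{C}=C([-\vartheta,0],\mathbb{R})$ and must be replaced by Schauder.

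Two refinements to your own caveats. First, the compactness needed for Schauder is not automatic when $\Theta<\vartheta$: the segment $\digamma_\Theta(\phi)$ then still contains a piece of the raw initial datum $\phi$, so $\kappa(\Omega_\varepsilon)$ need not be equi-continuous. The clean fix is to apply Schauder on the smaller set of functions in $\Omega_\varepsilon$ that are Lipschitz with the a priori derivative bound $L=\sup|\digamma'|$ on $K_\varepsilon$ (a compact convex invariant subset), or to assume $\Theta\geqslant\vartheta$, or to iterate the map. Second, your suggested finite-dimensional reduction via constant initial data would \emph{not} rescue the paper's argument: the image of a constant history under the period map is generally non-constant, so the induced scalar map only yields $\digamma(\Theta)=\digamma(0)$, which is precisely the insufficient endpoint condition. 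In short, your approach is correct modulo the compactness detail above, it uses a genuinely different key lemma (Schauder in place of Brouwer), and it correctly identifies a real gap in the paper's own proof rather than merely reproducing it.
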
 
\begin{proof}
Construct a mapping $ \kappa:\mathbb{R}\rightarrow\mathbb{R} $ by
$ \kappa(\digamma_0)=\digamma(t_0+\Theta,t_0,\digamma_0), \digamma_0\in\mathbb{R} $
and $ \digamma(t,t_0,\digamma_0) $ is a solution to \eqref{eq:1} through the point $ (t_0,\digamma_0) . $
We know that the set $ K_\varepsilon $ is positive invariant w.r.t system \eqref{eq:1}.
Hence $ \kappa $ maps $ K_\varepsilon $ into $ K_\varepsilon $. Hence, we can say that $ \kappa(K_\varepsilon)$ is a subset of $K_\varepsilon . $
Owing to the continuity of the solution to system \eqref{eq:1} with reference to the initial point, $\kappa $ is continuous. 
Also, $ K_\varepsilon $ satisfies the definitions of a bounded set, closed set and convex set in $ \mathbb{R}. $
Using Lemma \ref{lem:1}, $ K_\varepsilon $ contains minimum one fixed point in $ \kappa $.
This means that $\exists$ $ \digamma^\ast\in K_\varepsilon $ $\ni$ 
$ \kappa(\digamma^\ast)=\digamma^\ast . $
We know that $ \kappa(\digamma^\ast)=\digamma(t_0+\Theta,t_0,\digamma^\ast) $
which implies $ \digamma^\ast=\digamma(t_0+\Theta,t_0,\digamma^\ast) . $ 
That is, there exists atleast one positive periodic solution $ \digamma^\ast $ and
invariance of $ K_\varepsilon $ assures that $ \digamma^\ast\in K_\varepsilon . $
\end{proof}
To derive Theorem \ref{thm:t5}, we make use of some facts about the degree theory: \\
Consider two normed vector spaces $X$ and $Z$. Consider a linear operator $L$: Domain $L \rightarrow Z$ where Domain $L$ is a subset of $X$ \cite{ZX15}. Let $S$ be any operator mapping from the normed vector space X to normed vector space Z in such a way that $S$ is continuous. Suppose dimension of Kernel $L$ and co-dimension of Image $L$ are equal and finite and if Image $L$ is a closed subset of $Z$ then $L$ becomes a zero index Fredholm mapping. If projections $P$ and $Q$ mapping from $X$ to $X$ and $Z$ to $Z$, respectively, are continuous and maintain equality between sets Image $P$ and Kernel $L$ and sets Kernel $Q$, Image $L$ and Image $(I-Q)$, and suppose $L$ is a zero index Fredholm mapping, then the mapping $L|$Domain $L$ $\cap$ Kernel $P$ from the set $(I-P)X$ to Image $L$ will be an invertible mapping, represented with the notation $K_P$. Over the set $\Bar{\Omega}$, $S$ is known to be $L$-compact if $K_P(I-Q)S$ mapping from $\Bar{\Omega}$ to the normed vector space $X$ becomes compact and $QS(\Bar{\Omega})$ remains bounded (only when a subset of $X$, $\Omega$ remains open and bounded in $X$). $J$ : Image $Q$ $\rightarrow$ Kernel $L$, an isomorphism, exists because, Image $Q$ is isomorphic to Kernel $L$. 
\begin{lemma}\label{lem:2}
(Continuation theorem)\normalfont{\cite{GM77}} \textit{For a zero index Fredholm mapping L and a L-compact set on $ \Bar{\Omega},$ denoted by S, if \\
(a) Any solution $x$ to $ Lx=\Lambda Sx $ is $\ni$ $ x\notin\partial\Omega$ ($\forall$ $ \Lambda \ni $ $0< \Lambda <1 $), \\
(b) For every $ x$ lying in the set $ Kernel L \cap \partial\Omega $, $ QSx$ is never zero. Also, the Brouwer degree, given by $ deg\{JQS,\Omega\cap Kernel L,0\} $ never equals zero.\\
If the above two conditions hold, equation $ Lx=Sx $ contains minimum one solution inside the set $\Bar{\Omega} \cap Domain L $.}
\end{lemma}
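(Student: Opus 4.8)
The plan is to treat this as the classical Mawhin continuation principle and to recover it from the Leray--Schauder degree of a suitable compact vector field, which is exactly how Gaines and Mawhin \cite{GM77} establish it. First I would fix the Fredholm structure: since $L$ has index zero, I split $X = \ker L \oplus (I-P)X$ and $Z = \operatorname{Im} Q \oplus \operatorname{Im} L$ using the continuous projectors $P$ and $Q$ (with $\operatorname{Im} P = \ker L$, $\ker Q = \operatorname{Im} L$), let $K_P$ denote the inverse of $L$ restricted to $\operatorname{Dom} L \cap \ker P$, and keep the isomorphism $J:\operatorname{Im} Q \to \ker L$. The key algebraic step is to recast the coincidence equation as a fixed-point equation: for $\Lambda \in (0,1]$, applying $Q$ to $Lx=\Lambda Sx$ gives $QSx=0$ (since $QL=0$), while applying $K_P(I-Q)$ gives $(I-P)x = \Lambda K_P(I-Q)Sx$; reassembling, $Lx = \Lambda Sx$ is equivalent to $x = Px + JQSx + \Lambda K_P(I-Q)Sx =: M_\Lambda(x)$, and at $\Lambda=1$ this is precisely $Lx=Sx$.

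Next I would verify that $\{M_\Lambda\}_{\Lambda\in[0,1]}$ is a compact perturbation of the identity depending continuously on $\Lambda$. This is where the hypothesis that $S$ is $L$-compact on $\bar{\Omega}$ is used: both $QS$ and $K_P(I-Q)S$ are compact on $\bar{\Omega}$, so $M_\Lambda$ is an admissible compact homotopy and $\deg_{LS}(I - M_\Lambda, \Omega, 0)$ is defined whenever $M_\Lambda$ has no fixed point on $\partial\Omega$.

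I would then rule out boundary fixed points for every $\Lambda\in[0,1]$. If $M_1$ already has a fixed point on $\partial\Omega$, then $Lx=Sx$ is solved on the boundary and nothing remains, so I may assume it does not. For $\Lambda\in(0,1)$ a fixed point of $M_\Lambda$ is a solution of $Lx=\Lambda Sx$, which by hypothesis (a) cannot lie on $\partial\Omega$. For $\Lambda=0$ the equation $x=M_0(x)=Px+JQSx$ forces $x\in\ker L$ and $QSx=0$, which hypothesis (b) excludes on $\partial\Omega$. Homotopy invariance of the Leray--Schauder degree then gives $\deg_{LS}(I-M_1,\Omega,0)=\deg_{LS}(I-M_0,\Omega,0)$. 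Since $M_0$ takes values in the finite-dimensional subspace $\ker L$, the reduction property collapses the right-hand side to a Brouwer degree, $\deg_{LS}(I-M_0,\Omega,0)=\pm\deg_B(JQS,\Omega\cap\ker L,0)$, which is nonzero by (b). Hence $\deg_{LS}(I-M_1,\Omega,0)\neq 0$, so $M_1$ has a fixed point in $\Omega$, i.e. $Lx=Sx$ has a solution in $\bar{\Omega}\cap\operatorname{Dom} L$.

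I expect the main obstacle to be the two technical reductions rather than the geometric idea: (i) establishing cleanly the equivalence between the operator equation $Lx=\Lambda Sx$ and the fixed-point formulation $x=M_\Lambda(x)$, including the degenerate endpoint $\Lambda=0$ where the correspondence must be handled through the projections rather than through the equation itself; and (ii) justifying the passage from the infinite-dimensional Leray--Schauder degree at $\Lambda=0$ to the finite-dimensional Brouwer degree on $\ker L$, which rests on the reduction and excision properties of the degree and requires checking that the sign factor introduced by $J$ cannot cancel the nonvanishing. Because this statement is quoted verbatim from \cite{GM77}, in the present paper it is simply invoked as a tool for Theorem \ref{thm:t5}; the sketch above is how I would reconstruct it from the foundations of coincidence degree theory.
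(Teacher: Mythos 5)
Your proposal is correct, and it reconstructs precisely the classical Gaines--Mawhin argument from \cite{GM77}, which is all the paper does here: Lemma~\ref{lem:2} is quoted without proof as an imported tool for Theorem~\ref{thm:t5}, so there is no internal proof to diverge from. Your key steps --- the equivalence of $Lx=\Lambda Sx$ with the fixed-point form $x=Px+JQSx+\Lambda K_P(I-Q)Sx$, compactness of the homotopy via $L$-compactness of $S$, exclusion of boundary fixed points by (a) for $\Lambda\in(0,1)$ and by (b) at $\Lambda=0$, and homotopy invariance followed by the finite-dimensional reduction $\deg_{LS}(I-M_0,\Omega,0)=\pm\deg_{B}(JQS,\Omega\cap\ker L,0)\neq 0$ --- are exactly the steps of the standard coincidence-degree proof.
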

\begin{theorem}\label{thm:t5}
If $ \Tilde{\varrho\varphi}>(\Tilde{c} M^\varepsilon)^2 $ holds, system \eqref{eq:1} must possess minimum one positive($>0$) $\Theta$-periodic solution.
\end{theorem}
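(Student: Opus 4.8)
The plan is to invoke the continuation theorem (Lemma \ref{lem:2}) after an exponential change of variables that simultaneously enforces positivity and recasts the delayed equation in a coincidence-degree framework. Setting $\digamma(t)=e^{u(t)}$ turns \eqref{eq:1} into $u'(t)=(Nu)(t)$, where $(Nu)(t)$ is the expression obtained from the right-hand side of \eqref{eq:1} by dividing through by $\digamma(t)=e^{u(t)}$ and substituting $\digamma(t-\vartheta_i)=e^{u(t-\vartheta_i)}$. I would take $X=Z$ to be the Banach space of continuous $\Theta$-periodic functions $u:\mathbb{R}\to\mathbb{R}$ under the supremum norm, define $L:\operatorname{Dom}L\subset X\to Z$ by $Lu=u'$ on $\operatorname{Dom}L=\{u\in X:u\in C^1\}$, and let $N:X\to Z$ be the nonlinear map above. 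A $\Theta$-periodic solution of $Lu=Nu$ then corresponds exactly to a positive $\Theta$-periodic solution $\digamma^\ast=e^{u^\ast}$ of \eqref{eq:1}, so it suffices to produce one such $u^\ast$.

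First I would verify the structural hypotheses of Lemma \ref{lem:2}. The kernel of $L$ is the set of constant functions, so $\dim\operatorname{Ker}L=1$, while $\operatorname{Im}L=\{z\in Z:\int_0^\Theta z(t)\,dt=0\}$ is closed and of codimension one; hence $L$ is a Fredholm mapping of index zero. The natural continuous projectors are the averaging operators $Pu=\Tilde u$ and $Qz=\Tilde z$, with $\operatorname{Im}P=\operatorname{Ker}L$ and $\operatorname{Ker}Q=\operatorname{Im}L$. One then checks that $N$ is $L$-compact on every $\Bar{\Omega}$: the set $QN(\Bar{\Omega})$ is bounded and $K_P(I-Q)N(\Bar{\Omega})$ is relatively compact by the Arzel\`a--Ascoli theorem, since the periodic coefficients together with the bounded saturating rational terms yield uniformly bounded and equicontinuous images. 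This bookkeeping is routine.

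The crux is the a priori estimate that furnishes the open bounded set $\Omega$ of condition (a). Suppose $Lu=\Lambda Nu$ for some $\Lambda\in(0,1)$; then $u'(t)=\Lambda (Nu)(t)$, and integrating over one period with periodicity gives $\int_0^\Theta (Nu)(t)\,dt=0$, i.e. the balance identity equating the integrated recruitment minus harvest to $\int_0^\Theta[\varphi(t)+c(t)e^{u(t)}]\,dt$. From this identity I would bound $\int_0^\Theta|u'(t)|\,dt$ and then use $\max_t u-\min_t u\leqslant\int_0^\Theta|u'(t)|\,dt$ together with the mean value to obtain two-sided bounds $u_-\leqslant u(t)\leqslant u_+$ that do not depend on $\Lambda$. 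The hard part will be controlling the rational birth and harvest terms, whose denominators $\varphi(t)e^{\varphi(t)\vartheta_i}+c(t)(e^{\varphi(t)\vartheta_i}-1)e^{u(t-\vartheta_i)}$ couple the delayed states; their saturating structure (each term stays bounded as $e^{u}\to\infty$) is exactly what keeps the upper bound finite, while the lower bound must exploit that recruitment dominates mortality as $u\to-\infty$ under the stated hypothesis. I would then set $\Omega=\{u\in X:\|u\|<R\}$ with $R$ exceeding these bounds and the root located below.

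For condition (b) I would restrict to $\operatorname{Ker}L$, i.e. to constants $u\equiv\ln x$ with $x>0$; there every delay is inert ($\digamma(t-\vartheta_i)=x$) and $QNu=0$ collapses to the averaged steady-state equation $\frac{1}{\Theta}\int_0^\Theta(Nu)\,dt=0$. The hypothesis $\Tilde{\varrho\varphi}>(\Tilde{c}M^\varepsilon)^2$ is precisely what makes this scalar equation possess a positive root $x^\ast$ lying strictly inside $\Omega$, so that $QNu\neq 0$ on $\operatorname{Ker}L\cap\partial\Omega$; a sign (monotonicity) argument on the one-dimensional kernel then gives a nonzero Brouwer degree $\deg\{JQN,\Omega\cap\operatorname{Ker}L,0\}$. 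With (a) and (b) in hand, Lemma \ref{lem:2} yields a solution $u^\ast\in\Bar{\Omega}\cap\operatorname{Dom}L$ of $Lu=Nu$, and $\digamma^\ast(t)=e^{u^\ast(t)}$ is the required positive $\Theta$-periodic solution. I expect the a priori bound of the previous paragraph to be the genuine obstacle, the Fredholm/compactness verification and the one-dimensional degree count being comparatively standard.
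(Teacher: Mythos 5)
Your proposal follows essentially the same route as the paper's own proof: the exponential substitution $\digamma(t)=e^{u(t)}$, the operator $Lu=u'$ on $\Theta$-periodic functions with the averaging projectors $P=Q$, the Fredholm index-zero and $L$-compactness bookkeeping, a priori bounds obtained by integrating the balance identity over one period and combining $\int_0^\Theta|u'|\,dt$ with bounds at the extremal points (with the hypothesis $\Tilde{\varrho\varphi}>(\Tilde{c}M^\varepsilon)^2$ entering exactly where you place it, in making the lower bound meaningful), and a degree count on the one-dimensional kernel via the averaged scalar equation. The only cosmetic difference is that the paper computes the Brouwer degree by an explicit homotopy $H_\zeta$ that switches off the harvest term rather than your direct monotonicity argument, so the proposal is correct and essentially identical in approach.
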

\begin{proof}
Let $ \digamma(t)=\exp\{n(t)\}, \digamma(t-\vartheta_1)=\exp\{n(t-\vartheta_1)\}, \digamma(t-\vartheta_2)=\exp\{n(t-\vartheta_2)\} $. \\
Then system \eqref{eq:1} becomes \vspace{-0.5cm}
\begin{equation}\label{eq:z}
\begin{split}
n'(t)=\frac{\varrho(t)\varphi(t)e^{n(t-\vartheta_1)}}{e^{n(t)}[\varphi(t)e^{\varphi(t)\vartheta_1}+c(t)(e^{\varphi(t)\vartheta_1}-1)e^{n(t-\vartheta_1)}]}-\varphi(t)-c(t)e^{n(t)} \\ -\frac{q(t)E(t)\varphi(t)e^{n(t-\vartheta_2)}}{e^{n(t)}[\varphi(t)e^{\varphi(t)\vartheta_2}+c(t)(e^{\varphi(t)\vartheta_2}-1)e^{n(t-\vartheta_2)}]}\cdot 
\end{split}
\end{equation}
Consider 
\begin{center}
$ X=\{n\in C(\mathbb{R},\mathbb{R}):\digamma(t+\Theta)=\digamma(t)\} $ and 
$ \| n \|= $ $\max\limits_{t\in[0,\Theta]}|n(t)| $ for $ n \in X $.
\end{center}
$X$ equipped with the mentioned norm $ \| \, \| $ becomes a complete normed space.\\
Suppose
\begin{center} \vspace{-0.5cm}
$ Sn=S(t)=\dfrac{\varrho(t)\varphi(t)e^{n(t-\vartheta_1)}}{e^{n(t)}[\varphi(t)e^{\varphi(t)\vartheta_1}+c(t)(e^{\varphi(t)\vartheta_1}-1)e^{n(t-\vartheta_1)}]}-\varphi(t)-c(t)e^{n(t)} $ \vspace{0.1cm}\\ 
$ -\dfrac{q(t)E(t)\varphi(t)e^{n(t-\vartheta_2)}}{e^{n(t)}[\varphi(t)e^{\varphi(t)\vartheta_2}+c(t)(e^{\varphi(t)\vartheta_2}-1)e^{n(t-\vartheta_2)}]} $.
\end{center}
We know that $L$ is a linear map, $L$: Domain $L \rightarrow Z$ where Domain $L$ is a subset of $X$. Let $ L(n)=n' $ where $ n\in X$ and $ n'  \in Z$. We also know that projections $P$ and $Q$ mapping from $X$ to $X$ and $Z$ to $Z$, respectively, are continuous. Let $ P(n)= Q(n)= \frac{1}{\Theta}\int_{0}^{\Theta}n(t)dt$ for $n$ that lies in $X $.
Then \vspace{-0.5cm}
\begin{center}
Kernel $ L=\{n\in X/ n(t)=r\in\mathbb{R} $ for $ t\in\mathbb{R}\}, $ \vspace{0.1cm}\\
Image $ L=\{n\in X/\int_{0}^{\Theta}n(t)dt=0\}, $
\end{center}
where $ r \in\mathbb{R}, $ dimension Kernel $ L$ and co-dimension of Image $ L $ are equal to one and thus finite ($< \infty$), also Image $ L $ $\in Z$ is a closed set. 
Hence L becomes a zero index Fredholm mapping.
Also, the sets Image $ L $, Image $ (I-Q)$ and Kernel $ Q$ maintain equality. The same holds true for sets Image $ P$ and Kernel $ L $. 
Mapping $ K_P$ defined by $ K_P$: Image $ L\rightarrow $ Domain $ L\cap $ Kernel $ P $ is given by
\begin{center}
$ K_P(n)=\int_{0}^{t}n(s)ds-\frac{1}{\Theta}\int_{0}^{\Theta}\int_{0}^{t}n(s)dsdt $.
\end{center}
Over the set $\Bar{\Omega}$, $S$ is known to be $L$-compact because the sets $K_P(I-Q)S$ and $QS$ are continuous (only when a subset of $X$, $\Omega$ remains open and bounded in $X$).
Then the equation $ L\digamma =\Lambda S\digamma $ where $0< \Lambda <1 $ gives
\begin{center}
$ n'(t)=\Lambda\bigg[\dfrac{\varrho(t)\varphi(t)e^{n(t-\vartheta_1)}}{e^{n(t)}[\varphi(t)e^{\varphi(t)\vartheta_1}+c(t)(e^{\varphi(t)\vartheta_1}-1)e^{n(t-\vartheta_1)}]}-\varphi(t)-c(t)e^{n(t)} $ \\ 
$ -\dfrac{q(t)E(t)\varphi(t)e^{n(t-\vartheta_2)}}{e^{n(t)}[\varphi(t)e^{\varphi(t)\vartheta_2}+c(t)(e^{\varphi(t)\vartheta_2}-1)e^{n(t-\vartheta_2)}]}\bigg]\cdot $
\end{center}
If $ n(t)\in X $ is any arbitrary solution to the above equation for $ \Lambda\in (0,1), $ \\
we get, by integration of above equation over $ [0,\Theta]$,
\begin{center}
$ \int_{0}^{\Theta}n'(t)dt=\Lambda\bigg[\int_{0}^{\Theta}\dfrac{\varrho(t)\varphi(t)e^{n(t-\vartheta_1)}}{e^{n(t)}[\varphi(t)e^{\varphi(t)\vartheta_1}+c(t)(e^{\varphi(t)\vartheta_1}-1)e^{n(t-\vartheta_1)}]}dt-\int_{0}^{\Theta}\varphi(t)dt-\int_{0}^{\Theta}c(t)e^{n(t)}dt $ \\
$ -\int_{0}^{\Theta}\dfrac{q(t)E(t)\varphi(t)e^{n(t-\vartheta_2)}}{e^{n(t)}[\varphi(t)e^{\varphi(t)\vartheta_2}+c(t)(e^{\varphi(t)\vartheta_2}-1)e^{n(t-\vartheta_2)}]}dt\bigg] $.
\end{center}
Since $n$ is a periodic function in $X$, $n(t+\Theta)=n(t)$. Taking $t=0$ we get $n(\Theta)=n(0)$. Hence $ \int_{0}^{\Theta}n'(t)dt = n(\Theta)-n(0)=0$.
This implies
\begin{equation}\label{eq:a}
\begin{split}
\int_{0}^{\Theta}\dfrac{\varrho(t)\varphi(t)e^{n(t-\vartheta_1)}}{e^{n(t)}[\varphi(t)e^{\varphi(t)\vartheta_1}+c(t)(e^{\varphi(t)\vartheta_1}-1)e^{n(t-\vartheta_1)}]}dt=\int_{0}^{\Theta}\varphi(t)dt+\int_{0}^{\Theta}c(t)e^{n(t)}dt \\
+\int_{0}^{\Theta}\dfrac{q(t)E(t)\varphi(t)e^{n(t-\vartheta_2)}}{e^{n(t)}[\varphi(t)e^{\varphi(t)\vartheta_2}+c(t)(e^{\varphi(t)\vartheta_2}-1)e^{n(t-\vartheta_2)}]}dt.
\end{split}
\end{equation}
Hence
\begin{center}
$ \int_{0}^{\Theta}|n'(t)|dt < \Lambda\int_{0}^{\Theta}\dfrac{\varrho(t)\varphi(t)e^{n(t-\vartheta_1)}}{e^{n(t)}[\varphi(t)e^{\varphi(t)\vartheta_1}+c(t)(e^{\varphi(t)\vartheta_1}-1)e^{n(t-\vartheta_1)}]}dt+\Lambda\int_{0}^{\Theta}\varphi(t)dt$\\ 
$-\Lambda\int_{0}^{\Theta}c(t)e^{n(t)}dt $ $ -\Lambda\int_{0}^{\Theta}\dfrac{q(t)E(t)\varphi(t)e^{n(t-\vartheta_2)}}{e^{n(t)}[\varphi(t)e^{\varphi(t)\vartheta_2}+c(t)(e^{\varphi(t)\vartheta_2}-1)e^{n(t-\vartheta_2)}]}dt $.
\end{center}
Using equation \eqref{eq:a} we get,
\begin{center}
$ \int_{0}^{\Theta}|n'(t)|dt < 2\Lambda\int_{0}^{\Theta}\varphi(t)dt<2\int_{0}^{\Theta}\varphi(t)dt $ where $\Lambda\in (0,1) $.
\end{center}
We can say that
\begin{equation}\label{eq:b}
\int_{0}^{\Theta}|n'(t)|dt < 2\Tilde{\varphi}\Theta.
\end{equation}
We now denote
\begin{equation}\label{eq:d}
n(\xi_1)=\min\limits_{t\in[0,\Theta]}n(t) \, \textrm{and} \, n(\xi_2)=\max\limits_{t\in[0,\Theta]}n(t).
\end{equation}
From equations \eqref{eq:a}, \eqref{eq:d} we get
\begin{center}
$ \int_{0}^{\Theta}\dfrac{\varrho(t)\varphi(t)}{e^{n(\xi_1)}c(t)}dt \geqslant \int_{0}^{\Theta}c(t)e^{n(\xi_1)}dt $,
\end{center}
which is same as 
\begin{center}
$ \dfrac{1}{e^{n(\xi_1)}}\int_{0}^{\Theta}\dfrac{\varrho\varphi}{c}(t)dt\geqslant\Tilde{c}\Theta e^{n(\xi_1)} $.
\end{center}
That implies
\begin{center}
$ \dfrac{\Tilde{\varrho\varphi}}{\Tilde{c}}\geqslant \Tilde{c}e^{2n(\xi_1)} $.
\end{center}
Hence we obtain
\begin{equation}\label{eq:c}
n(\xi_1)\leqslant \dfrac{1}{2}\mbox{log}(\dfrac{\Tilde{\varrho\varphi}}{(\Tilde{c})^2})=j_1.
\end{equation} 
Adding equations \eqref{eq:b}, \eqref{eq:c} we get
\begin{center}
$ n(t)\leqslant n(\xi_1)+\int_{0}^{\Theta}|n '(t)|dt\leqslant \dfrac{1}{2}\mbox{log}(\dfrac{\Tilde{\varrho\varphi}}{(\Tilde{c})^2})+2\Tilde{\varphi}\Theta=R_1 $.
\end{center}
Hence we obtain
\begin{center}
$ n(t)\leqslant R_1 $.
\end{center}
From equations \eqref{eq:a},\eqref{eq:d} we get
\begin{center}
$\int_{0}^{\Theta} \varrho(t)\varphi(t) dt\leqslant\int_{0}^{\Theta}c(t)e^{n(\xi_2)}dt $,
\end{center}
which is same as
\begin{center}
$ \Tilde{\varrho\varphi} \leqslant e^{n(\xi_2)}\Tilde{c}$.
\end{center}
Using  $e^{n(\xi_2)}\Tilde{c} \leqslant e^{n(\xi_2)}\Tilde{c}+ (\Tilde{c}M^\varepsilon)^2 $, we get
\begin{center}
$ \Tilde{\varrho\varphi} \leqslant e^{n(\xi_2)}\Tilde{c}+ (\Tilde{c}M^\varepsilon)^2 $.
\end{center}
Hence we get
\begin{equation}\label{eq:e}
n(\xi_2)\geqslant \mbox{log}(\dfrac{\Tilde{\varrho\varphi}-(\Tilde{c}M^\varepsilon)^2}{\Tilde{c}})=j_2.
\end{equation}
Subtracting equation \eqref{eq:b} from equation \eqref{eq:e} we get
\begin{center}
$ n(t)\geqslant n(\xi_2)-\int_{0}^{\Theta}|n '(t)|dt\geqslant \mbox{log}(\dfrac{\Tilde{\varrho\varphi}-(\Tilde{c}M^\varepsilon)^2}{\Tilde{c}})-2\Tilde{\varphi}\Theta=R_2 $.
\end{center}
Hence we obtain
\begin{center}
$ n(t)\geqslant R_2 $.
\end{center}
Therefore $\max\limits_{t\in[0,\Theta]}|n(t)|\leqslant $ max $ \{|R_1|,|R_2|\}=D_1 $ ($ D_1 $ is independent of $ \Lambda $). \\
Take $ D=D_1+D_2 $, where $ D_2>0 $ such that $ D_2>|j_1|+|j_2| $ where $j_1=\dfrac{1}{2}\mbox{log}(\dfrac{\Tilde{\varrho\varphi}}{(\Tilde{c})^2})$ and $j_2=\mbox{log}(\dfrac{\Tilde{\varrho\varphi}-(\Tilde{c}M^\varepsilon)^2}{\Tilde{c}})$.
From the following equation where $ 0\leqslant\zeta\leqslant 1 $ is a parameter,
\begin{center}
$ \vspace{0.5cm}\int_{0}^{\Theta}\dfrac{\varrho(t)\varphi(t)e^{n(t-\vartheta_1)}}{e^{n}[\varphi(t)e^{\varphi(t)\vartheta_1}+c(t)(e^{\varphi(t)\vartheta_1}-1)e^{n(t-\vartheta_1)}]}dt-\int_{0}^{\Theta}\varphi(t)dt-\int_{0}^{\Theta}c(t)e^{n(t)}dt $ \vspace{-0.5cm}
\end{center}
\begin{center}
$ -\zeta \int_{0}^{\Theta}\dfrac{q(t)E(t)\varphi(t)e^{n(t-\vartheta_2)}}{e^{n}[\varphi(t)e^{\varphi(t)\vartheta_2}+c(t)(e^{\varphi(t)\vartheta_2}-1)e^{n(t-\vartheta_2)}]}dt=0 $,
\end{center}
It can be shown that a solution $ n^\ast $ to the above equation satisfies $ j_1\leqslant n^\ast\leqslant j_2. $ Let $\Omega=\{n\in X/ $ $ \|n \|< D\}.$ We can prove that $ \Omega $ obeys constraint (a) of Lemma \ref{lem:2}. \\
Suppose $ n\in $ Kernel $L$ $ \cap$ $\partial\Omega $ =$\mathbb{R}\cap\partial\Omega$, vector $ n$ is a constant, $ |n|=\|n\|=D $. From the definition of D and $ j_1\leqslant n^\ast\leqslant j_2, $ we have $ Sn=S_1(t) $.
Then, \vspace{0.3cm}
\begin{center}
$ QSn=QS_1(t)=\frac{1}{\Theta}\int_{0}^{\Theta}S_1(t)dt $.
\end{center}
That implies
\begin{center}
$\vspace{0.5cm} QSn=\dfrac{1}{\Theta}\int_{0}^{\Theta}\dfrac{\varrho(t)\varphi(t)e^{n(t-\vartheta_1)}}{e^{n}[\varphi(t)e^{\varphi(t)\vartheta_1}+c(t)(e^{\varphi(t)\vartheta_1}-1)e^{n(t-\vartheta_1)}]}-\Tilde{\varphi}-\Tilde{c}e^{n} $ \\ $ -\dfrac{1}{\Theta}\int_{0}^{\Theta}\dfrac{q(t)E(t)\varphi(t)e^{n(t-\vartheta_2)}}{e^{n}[\varphi(t)e^{\varphi(t)\vartheta_2}+c(t)(e^{\varphi(t)\vartheta_2}-1)e^{n(t-\vartheta_2)}]}dt\ne0 $. \vspace{0.3cm}
\end{center}
That is the first part of constraint (b) of Lemma \ref{lem:2}. \\
Let us have a look at the homotopy
\begin{center}
$ H_\zeta(n)=\zeta QS(n)+(1-\zeta)G(n), $ $ \zeta \in[0,1] $,
\end{center}
where
\begin{center}
$ G(n)= \dfrac{1}{\Theta}\int_{0}^{\Theta}\dfrac{\varrho(t)\varphi(t)e^{n(t-\vartheta_1)}}{e^{n}[\varphi(t)e^{\varphi(t)\vartheta_1}+c(t)(e^{\varphi(t)\vartheta_1}-1)e^{n(t-\vartheta_1)}]}-\Tilde{\varphi}-\Tilde{c}e^{n} $. \vspace{0.3cm}
\end{center}
From $ j_1\leqslant n^\ast\leqslant j_2, $ $ 0\notin H_\zeta $ (Kernel $ L$ $\cap$ $\partial\Omega$) for $ 0 \leqslant \zeta \leqslant 1. $ 
$ G(n)=0 $ possesses a unique solution that belongs to $ \mathbb{R}. $
As Image $ Q= $ Kernel $ L$ we have $ J=I $ and hence using homotopy invariance property, we get
$ deg\{JQS,\Omega $ $ \cap $ Kernel $ L,0\} $ is same as $deg\{QS,\Omega $ $ \cap $ Kernel $ L,0\}$ which is same as $deg\{G,\Omega $ $ \cap $ Kernel $ L,0\}$ which is not equal to zero.
Therefore using Lemma \ref{lem:2}, equation $ L\digamma=S\digamma $ has minimum one solution in the set $\Bar{\Omega} \cap $ Domain $ L $.
\, This implies, equation \eqref{eq:z} has minimum one $\Theta$-periodic solution in the set $\Bar{\Omega} \cap $ Domain $ L $, say $ n^\ast(t). $
Set $ \digamma^\ast(t)=exp\{n^\ast(t)\} $ and thus $ \digamma^\ast(t) $ becomes a $ \Theta$-periodic solution to system \eqref{eq:1}. Hence proved.
\end{proof}
\section{Existence of almost periodic solution}\label{s7}
It is very well known that the prey-predator relationships in day to day life undergo several types of perturbations, among which certain perturbations are periodic \cite{FK04}. Sometimes the perturbations can be quasi-periodic or almost periodic.
A function $f:\mathbb{R}\rightarrow\mathbb{R}$ is $\ni$ $f(t+T)=f(t)$ holds for infinitely many values of $T$, with a random degree of precision ($T$ is spread over $-\infty$ to $+\infty$) in a way that empty intervals of arbitrarily great distance are not left, then $f$ is said to be almost periodic \cite{TJP16}. Now let us assume that $E(t)$, $\varphi(t)$, $c(t)$, $\varrho(t)$ and $q(t)$ are functions in $t$(time) that satisfy almost periodicity property.
\begin{lemma}\label{lem:3}
(Arzela-Ascoli theorem)\normalfont{\cite{BB64}} \textit{Suppose that $f$, $g$ are any two positive integers, $K$ $\subset$ $\mathbb{R}^f$ and $K$ is compact in it and $\sigma=\{j/ j\in C(K,\mathbb{R}^g)\}$, then below mentioned properties are identical: \\
(a) $\sigma$ is a bounded set and is equi-continuous on $K$. \\
(b) Each sequence in $\sigma$ contains a subsequence that becomes convergent in $K$, uniformly.}
\end{lemma}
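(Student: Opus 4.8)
The plan is to establish the equivalence of (a) and (b) by proving the two implications separately, treating the passage from boundedness-plus-equicontinuity to the existence of uniformly convergent subsequences as the substantive direction and the reverse as a pair of contradiction arguments.

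First I would prove (a) $\Rightarrow$ (b). Since $K$ is a compact metric space it is separable, so fix a countable dense subset $\{x_1,x_2,\dots\}\subset K$. Given any sequence $\{j_n\}\subset\sigma$, boundedness of $\sigma$ makes $\{j_n(x_1)\}$ a bounded sequence in $\mathbb{R}^g$, so by Bolzano--Weierstrass it has a convergent subsequence; repeating at $x_2,x_3,\dots$ and extracting nested subsequences, the diagonal subsequence $\{j_{n_k}\}$ converges at every $x_m$. The key step is to upgrade this pointwise convergence on the dense set to uniform convergence on all of $K$ using equicontinuity. Given $\varepsilon>0$, equicontinuity supplies $\Delta>0$ with $|j(x)-j(y)|<\varepsilon$ whenever $d(x,y)<\Delta$, uniformly in $j\in\sigma$; compactness of $K$ lets me cover $K$ by finitely many balls of radius $\Delta$ centred at dense points, and on that finite set $\{j_{n_k}\}$ is Cauchy. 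An $\varepsilon/3$-type split of $|j_{n_k}(x)-j_{n_l}(x)|$ through a nearby dense point then shows $\{j_{n_k}\}$ is uniformly Cauchy, hence uniformly convergent by completeness of $\mathbb{R}^g$.

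Next I would prove (b) $\Rightarrow$ (a). For boundedness, if $\sigma$ were unbounded I could choose $j_n\in\sigma$ with $\|j_n\|\to\infty$; no subsequence could converge uniformly, since a uniform limit is bounded and forces a uniformly bounded tail, contradicting (b). For equicontinuity, suppose it fails: then there exist $\varepsilon_0>0$, points $x_n,y_n\in K$ with $d(x_n,y_n)\to 0$, and $j_n\in\sigma$ with $|j_n(x_n)-j_n(y_n)|\geqslant\varepsilon_0$. By (b) pass to a uniformly convergent subsequence $j_{n_k}\to j$; the uniform limit $j$ is continuous on the compact set $K$, hence uniformly continuous, and this, combined with $d(x_{n_k},y_{n_k})\to 0$ and the uniform closeness of $j_{n_k}$ to $j$, contradicts the lower bound $\varepsilon_0$.

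The hard part will be the (a) $\Rightarrow$ (b) direction: the bookkeeping of the nested subsequences in the diagonal extraction together with the finite-cover argument and the careful estimate splitting $|j_{n_k}(x)-j_{n_l}(x)|$ across a nearby dense point is where the real content lies. The reverse implication is comparatively routine, reducing to the two contradiction arguments above.
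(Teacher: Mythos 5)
Your proof is correct and is the classical argument: the diagonal extraction over a countable dense subset plus the $\varepsilon/3$ upgrade via equicontinuity and a finite $\Delta$-net for (a) $\Rightarrow$ (b), and the two contradiction arguments for (b) $\Rightarrow$ (a), with all quantifiers handled properly. The paper itself offers no proof of this lemma --- it is quoted as a known result with a citation to Bartle's \emph{The Elements of Real Analysis} --- and your argument is essentially the standard one given in such references, so there is nothing in the paper for it to diverge from; the only point worth noting is that the paper's statement literally takes $\sigma$ to be all of $C(K,\mathbb{R}^g)$ and its Definition 2.5 garbles the quantifiers of equicontinuity, whereas you correctly read $\sigma$ as an arbitrary subfamily and use the correct definition.
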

\begin{theorem}\label{thm:t6}
If every constraint in Theorem \ref{thm:t3} is valid, there is a unique solution to system \eqref{eq:1} which is almost periodic in nature.
\end{theorem}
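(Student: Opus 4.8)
The plan is to realise the almost periodic solution as a uniform limit of time-translates of a single bounded solution, exploiting permanence for boundedness, the Arzela--Ascoli theorem (Lemma~\ref{lem:3}) to extract the limit, and the global attractivity of Theorem~\ref{thm:t3} to promote that limit to an almost periodic function and to force uniqueness. Concretely, I would first fix a solution $\digamma(t)$ with positive initial data; by Theorem~\ref{thm:t2} it is permanent, so it eventually enters and remains in the compact set $K_\varepsilon$, and from \eqref{eq:1} its derivative is bounded on $K_\varepsilon$ because every term is a continuous function of quantities confined to $[m^\varepsilon,M^\varepsilon]$. Thus $\digamma$ is bounded and uniformly continuous on a half-line.

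Next I would build the translation sequence. Since $\varrho,\varphi,c,E,q$ are almost periodic they share, for each $\eta>0$, a relatively dense set of common $\eta$-almost periods, so one can select $t_n\to+\infty$ along which every coefficient converges uniformly on $\mathbb{R}$, say $\varrho(t+t_n)\to\varrho(t)$ and likewise for the remaining parameters. Setting $\digamma_n(t)=\digamma(t+t_n)$, each $\digamma_n$ solves \eqref{eq:1} with the translated coefficients and takes values in $K_\varepsilon$, so the family $\{\digamma_n\}$ is uniformly bounded, and since the translated right-hand sides are uniformly bounded the family is also equicontinuous. Lemma~\ref{lem:3} then supplies a subsequence, still written $\digamma_n$, converging uniformly on every compact interval to a continuous $\digamma^\ast(t)\in K_\varepsilon$. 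Passing to the limit in the integral form of \eqref{eq:1}, using the uniform convergence of both the coefficients and the translates $\digamma_n$ (including the delayed arguments $\digamma_n(t-\vartheta_1)$ and $\digamma_n(t-\vartheta_2)$), shows that $\digamma^\ast$ is a solution of \eqref{eq:1} defined and bounded on all of $\mathbb{R}$.

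The decisive step, and the one I expect to be hardest, is to show that $\digamma^\ast$ is genuinely almost periodic rather than merely a bounded entire solution, and this is where Theorem~\ref{thm:t3} does the real work through Bochner's criterion. The idea is that the global attractivity estimate $U'(t)\leqslant -MU(t)$ from the proof of Theorem~\ref{thm:t3}, applied to the pair of translates $\digamma(t+t_n)$ and $\digamma(t+t_m)$ whose underlying coefficient systems become uniformly close as $n,m\to\infty$, forces $\sup_{t}|\digamma(t+t_n)-\digamma(t+t_m)|\to 0$; hence the translates form a uniformly Cauchy family and the convergence to $\digamma^\ast$ is in fact uniform on all of $\mathbb{R}$, whence $\digamma^\ast$ is almost periodic. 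The care required here is to make the ``close coefficients imply close solutions'' passage quantitative in the presence of the two delays, since the Lyapunov functional $U=\tfrac12(\digamma-\digamma^\ast)^2$ controls only the instantaneous discrepancy and the mismatch in the delayed and the coefficient terms must be absorbed using the confinement of all states to $K_\varepsilon$.

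Finally, uniqueness follows cleanly: if $\digamma_1^\ast$ and $\digamma_2^\ast$ were two almost periodic solutions, then their difference would again be almost periodic, yet Theorem~\ref{thm:t3} gives $|\digamma_1^\ast(t)-\digamma_2^\ast(t)|\to 0$ as $t\to+\infty$, and an almost periodic function tending to zero at infinity is identically zero, so $\digamma_1^\ast\equiv\digamma_2^\ast$. This same observation guarantees that the limit $\digamma^\ast$ constructed above does not depend on the particular subsequence chosen, completing the argument.
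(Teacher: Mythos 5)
Your proposal is correct in outline and shares the paper's skeleton---take a bounded solution guaranteed by permanence, translate it along a sequence $t_n\to\infty$, apply Arzela--Ascoli (Lemma~\ref{lem:3}), and pass to the limit in the equation---but it diverges from the paper at the decisive step, in a way worth recording. The paper, having extracted a uniformly convergent subsequence $\{r(t+t_m)\}$, asserts that $r$ is asymptotically almost periodic and invokes the decomposition $r(t+t_m)=r_1(t+t_m)+r_2(t+t_m)$ with $r_1$ almost periodic and $r_2\to 0$; it then shows $\dot r_1$ exists by interchanging the limits in $m$ and $h$, and verifies that $r_1$ solves \eqref{eq:1} by letting the almost periodic coefficients converge along $t_n$. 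Its attractivity hypothesis is used only implicitly, and uniqueness is asserted rather than argued. You instead make the attractivity do explicit work: the Lyapunov estimate $U'\leqslant -MU$ from Theorem~\ref{thm:t3}, applied to pairs of translates, upgrades convergence on compacta to a uniformly Cauchy family of translates (Bochner's criterion), which is precisely the justification missing from the paper's ``hence $r$ is asymptotically almost periodic'' step---Arzela--Ascoli alone yields convergence only on compact sets, and the paper's claimed uniform Cauchy property $|r(t+t_k)-r(t+t_m)|<\varepsilon$ for all $t$ does not follow from it without the attractivity input you supply. Your uniqueness argument (the difference of two almost periodic solutions is almost periodic, tends to zero by Theorem~\ref{thm:t3}, and an almost periodic function vanishing at infinity is identically zero) likewise furnishes a proof the paper omits, and it also shows the limit is independent of the subsequence. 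The one place your route demands genuine extra labor, which you correctly flag, is that the translates solve systems with perturbed coefficients and the Lyapunov derivative also involves the delayed differences; to close the argument one needs a Halanay-type inequality of the form $U'(t)\leqslant -\alpha U(t)+\beta\bigl(U(t-\vartheta_1)+U(t-\vartheta_2)\bigr)+\delta_{n,m}$, with $\alpha>\beta$ guaranteed by condition \eqref{eq:2} and $\delta_{n,m}\to 0$ as the coefficient translates converge. Modulo carrying out that estimate, your argument is sound and is in fact more complete than the paper's own proof.
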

\begin{proof}
We know that all the solutions to system \eqref{eq:1}, satisfying the criteria given in Theorem \ref{thm:t2}, will be ultimately bounded above. Therefore, it is said that $\exists$ a bounded (bordered by positive values) positive($>0$) solution $r(t)$ to system \eqref{eq:1}. Thus $\exists$ a sequence $\{t_n\}$, $t_n \rightarrow\infty$ as $n\rightarrow\infty$ $\ni$ $r(t+t_n)$ satisfies the next system,
\begin{center}
$\vspace{0.3cm}\digamma^{'}(t)=\dfrac{\varrho(t+t_n)\varphi(t+t_n)\digamma(t-\vartheta_1)}{\varphi(t+t_n)e^{\varphi(t+t_n)\vartheta_1}+c(t+t_n)(e^{\varphi(t+t_n)\vartheta_1}-1)\digamma(t-\vartheta_1)}-\varphi(t+t_n)\digamma(t)-c(t+t_n)\digamma^2(t)$\\
$-\dfrac{q(t+t_n)E(t+t_n)\varphi(t+t_n)\digamma(t-\vartheta_2)}{\varphi(t+t_n)e^{\varphi(t+t_n)\vartheta_2}+c(t+t_n)(e^{\varphi(t+t_n)\vartheta_2}-1)\digamma(t-\vartheta_2)}\cdot$
\end{center}
Thus ${\Dot{r}(t+t_n)}$ (derivative of ${r(t+t_n)}$) and ${r(t+t_n)}$ will be bounded uniformly and will also be equi-continuous. 
By Lemma \ref{lem:3}, for a sequence $\{r(t+t_n)\}$ $\exists$ a subsequence $\{r(t+t_m)\}$ that is uniformly convergent. 
And for any $\varepsilon>0$ $\exists$ $\rho(\varepsilon)>0$ $\ni$ $|r(t+t_k)-r(t+t_m)|<\varepsilon$ if $k,m>\rho(\varepsilon)>0$. Thus it can be deduced that $r(t)$ is asymptotically almost periodic. Thus $\{r(t+t_m)\}$ is expressed as $r(t+t_m)=r_1(t+t_m)+r_2(t+t_m)$ where function $r_1(t+t_m)$ is almost periodic and function $r_2(t+t_m)$ is continuous $\forall$ $t\in\mathbb{R}$.
\par We also have $\lim\limits_{m\rightarrow\infty}r_2(t+t_m)=0$, $\lim\limits_{m\rightarrow\infty}r_1(t+t_m)=r_1(t)$ where function $r_1$ is almost periodic. Hence $\lim\limits_{m\rightarrow\infty}r(t+t_m)=r_1(t)$.\\
Also, 
\begin{align*}
\vspace{0.3cm}\lim\limits_{m\rightarrow\infty}\Dot{r}(t+t_m)=\lim\limits_{m\rightarrow\infty}\lim\limits_{h\rightarrow0}\dfrac{r(t+t_m+h)-r(t+t_m)}{h}\vspace{0.5cm}\\
=\lim\limits_{h\rightarrow0}\lim\limits_{m\rightarrow\infty}\dfrac{r(t+t_m+h)-r(t+t_m)}{h}
\end{align*}
\begin{align*}
=\lim\limits_{h\rightarrow0}\dfrac{r_1(t+h)-r_1(t)}{h}\cdot 
\end{align*}
Thus, we can say that $\Dot{r_1}$ exists.
There is a sequence $\{t_n\}$ $\ni$ $t_n\rightarrow\infty$ as $n\rightarrow\infty$ in which $\varrho(t+t_n)\rightarrow \varrho(t)$, $c(t+t_n)\rightarrow c(t)$, $\varphi(t+t_n)\rightarrow\varphi(t)$, 
$E(t+t_n)\rightarrow E(t)$ and  $q(t+t_n)\rightarrow q(t)$. 
\begin{align*}
\vspace{1cm}\Dot{r_1}=\lim\limits_{n\rightarrow\infty}\dfrac{d}{dt}r(t+t_n)=\lim\limits_{n\rightarrow\infty}\bigg[\dfrac{\varrho(t+t_n)\varphi(t+t_n)r(t+t_n-\vartheta_1)}{\varphi(t+t_n)e^{\varphi(t+t_n)\vartheta_1}+c(t+t_n)(e^{\varphi(t+t_n)\vartheta_1}-1)r(t+t_n-\vartheta_1)}\vspace{0.5cm}\\ -\varphi(t+t_n)r(t+t_n)-c(t+t_n)r^2(t+t_n)\\ \vspace{0.5cm}-\dfrac{q(t+t_n)E(t+t_n)\varphi(t+t_n)r(t+t_n-\vartheta_2)}{\varphi(t+t_n)e^{\varphi(t+t_n)\vartheta_2}+c(t+t_n)(e^{\varphi(t+t_n)\vartheta_2}-1)r(t+t_n-\vartheta_2)}\bigg]\cdot \vspace{0.1cm}
\end{align*}
We get,
\begin{flalign*}
\Dot{r_1}=\dfrac{\varrho(t)\varphi(t)r(t-\vartheta_1)}{\varphi(t)e^{\varphi(t)\vartheta_1}+c(t)(e^{\varphi(t)\vartheta_1}-1)r(t-\vartheta_1)}-\varphi(t)r(t)-c(t)r^2(t)\\
-\dfrac{q(t)E(t)\varphi(t)r(t-\vartheta_2)}{\varphi(t)e^{\varphi(t)\vartheta_2}+c(t)(e^{\varphi(t)\vartheta_2}-1)r(t-\vartheta_2)}\cdot
\end{flalign*}
Thus we can conclude that $r_1$ satisfies system \eqref{eq:1} and it is almost periodic. Hence, we proved that system \eqref{eq:1} possesses a unique solution that is both positive and almost periodic in nature.
\end{proof}
\textbf{Remark}: Results proved in Theorems \ref{thm:t1},\ref{thm:t2} and \ref{thm:t3} for positive invariance, permanence and global attractivity remain valid for system \eqref{eq:1} with almost periodic coefficients. 
\section{Numerical simulation}\label{s8}
\textbf{Example 8.1.}
Let $\varrho(t)=1.3$, $c(t)=0.005$, $\varphi(t)=0.02+0.01(\cos t)$, $q(t)=0.9$,
$E(t)=1.3$, $\vartheta_1=2$, $\vartheta_2=1$, then system \eqref{eq:1} becomes
\begin{center}\vspace{-0.1cm}
$\digamma^{'}(t)=\dfrac{(1.3)(0.02+0.01(\cos t))\digamma(t-2)}{(0.02+0.01(\cos t))e^{2(0.02+0.01(\cos t))}+(0.005)(e^{2(0.02+0.01(\cos t))}-1)\digamma(t-2)}$
\end{center}
\begin{center}\vspace{-0.1cm}
$-(0.02+0.01(\cos t))\digamma(t) - (0.005)\digamma^2(t)$ 
\end{center}
\begin{center}\vspace{-0.1cm}
$-\dfrac{(0.9)(1.3)(0.02+0.01(\cos t))\digamma(t-1)}{(0.02+0.01(\cos t))e^{(0.02+0.01(\cos t)}+(0.005)(e^{0.02+0.01(\cos t)}-1)\digamma(t-1)}\cdot$
\end{center}
By simple numerical computation, \vspace{0.1cm}
$ M^\varepsilon =\dfrac{\varrho_M\varphi_M}{c_L}+\varepsilon=7.8001$, $ m^\varepsilon=\dfrac{\varrho_L\varphi_L-(c_M)^2(M^\varepsilon)^2 }{c_M} -\varepsilon=2.2959 $ for $\varepsilon=0.0001$. 
$ K_\varepsilon=\{\digamma(t)\in R / m^\varepsilon \leqslant \digamma(t) \leqslant M^\varepsilon \} $ = $\{\digamma(t)\in R / 2.2959 \leqslant \digamma(t) \leqslant 7.8001 \}$ and 
$\varrho_L\varphi_L=0.013$, $(c_M)^2(M^\varepsilon)^2=0.001521$ which implies that $\varrho_L\varphi_L > (c_M)^2(M^\varepsilon)^2$. \\ 
Then by Theorem \ref{thm:t1} we can say that $ K_\varepsilon$ satisfies the property of positive invariance (w.r.t system \eqref{eq:1}), which can also be seen from Figure \ref{fig:1}. \\
By taking $\varepsilon=0$ in Theorem \ref{thm:t1} we obtain Theorem \ref{thm:t2} in which $M^\varepsilon=M^0=7.8$ and  $m^\varepsilon=m^0=2.296$. From Figure \ref{fig:1} it is clear that system \eqref{eq:1} is permanent.\\
\textbf{Example 8.2.}
Let Let $\varrho(t)=1.3$, $c(t)=0.029+0.05(\sin^2t)$, $\varphi(t)=0.033+0.01(\sin t)$, $q(t)=0.88$,
$E(t)=1.2$, $\vartheta_1=2$, $\vartheta_2=1$, then system \eqref{eq:1} becomes \vspace{0.05cm}\\
$\digamma^{'}(t)=$ \vspace{-0.05cm}
\begin{center}
$\dfrac{(1.3)(0.033+0.01(\sin t))\digamma(t-2)}{(0.033+0.01(\sin t))e^{2(0.033+0.01(\sin t))}+(0.029+0.05(\sin^2(t)))(e^{2(0.033+0.01(\sin t))}-1)\digamma(t-2)}$
\end{center}
\begin{center}
$-(0.033+0.01(\sin t))\digamma(t) - (0.029+0.05(\sin^2t))\digamma^2(t)$
\end{center}
\begin{center}\vspace{-0.05cm}
$-\dfrac{(0.88)(1.2)(0.033+0.01(\sin t))\digamma(t-1)}{(0.033+0.01(\sin  t))e^{(0.033+0.01(\sin t)}+(0.029+0.05(\sin^2(t)))(e^{0.033+0.01(\sin t)}-1)\digamma(t-1)}\cdot$ 
\end{center}\vspace{-0.2cm}
We can calculate 
$ M^\varepsilon =\dfrac{\varrho_M\varphi_M}{c_L}+\varepsilon=1.94247931$, $ m^\varepsilon=\dfrac{\varrho_L\varphi_L-(c_M)^2(M^\varepsilon)^2 }{c_M} -\varepsilon=0.3602627 $ for $\varepsilon=0.0001$. 
And $\varrho_L\varphi_L=0.030329$, $(c_M)^2(M^\varepsilon)^2=0.0235487$ which implies that $\varrho_L\varphi_L > (c_M)^2(M^\varepsilon)^2$. 
By taking $\varepsilon=0$ we get $M^0=1.94237931$, $m^0=0.3603627$. Also, $\varrho_L\varphi_L>(c_M)^2(M^0)^2$.
From Theorem \ref{thm:t2} we can say that system \eqref{eq:1} is permanent, which is shown in Figure \ref{fig:2}. Also, $ K_\varepsilon=\{\digamma(t)\in R / m^\varepsilon \leqslant \digamma(t) \leqslant M^\varepsilon \} $ = $\{\digamma(t)\in R / 0.3602627 \leqslant \digamma(t) \leqslant 1.94247931 \}$ and we just showed that $\varrho_L\varphi_L > (c_M)^2(M^\varepsilon)^2$. Therefore, from Theorem \ref{thm:t1}, $ K_\varepsilon$ satisfies positive invariance property (w.r.t system \eqref{eq:1}), which can also be seen in Figure \ref{fig:2}.\\
\textbf{Example 8.3.}
Let $\varrho(t)=4.9+0.11(\cos^2t)$, $c(t)=6.4-0.025(\sin t)$, $\varphi(t)=0.02+0.01\cos t$, 
$q(t)=3.29$, $E(t)=0.9$, $\vartheta_1=2$, $\vartheta_2=1$, then system \eqref{eq:1} becomes \vspace{0.05cm}\\ 
$\digamma^{'}(t)=$
\begin{center}\vspace{-0.2cm}
$\dfrac{(4.9+0.11(\cos^2t))(0.02+0.01(\cos t))\digamma(t-2)}{(0.02+0.01(\cos t))e^{2(0.02+0.01(\cos t))}+(6.4-0.025(\sin t))(e^{2(0.02+0.01(\cos t))}-1)\digamma(t-2)}$
\end{center}
\begin{center}
$-(0.02+0.01(\cos t))\digamma(t) - (6.4-0.025(\sin t))\digamma^2(t)$
\end{center}
\begin{center}
$-\dfrac{(3.29)(0.9)(0.02+0.01(\cos t))\digamma(t-1)}{(0.02+0.01(\cos t))e^{(0.02+0.01(\cos t)}+(6.4-0.025(\sin t))(e^{0.02+0.01(\cos t)}-1)\digamma(t-1)}\cdot$
\end{center}\vspace{-0.2cm}
For $\varepsilon=0.0001$, we can compute 
$\varrho_L\varphi_L=0.049$, $(c_M)^2(M^\varepsilon)^2=0.0231517$ which implies that $\varrho_L\varphi_L > (c_M)^2(M^\varepsilon)^2$. 
Also from equation \eqref{eq:2},
\begin{center}
$\dfrac{-\varrho(t)\varphi^2(t)e^{\varphi(t)\vartheta_1}}{(\varphi(t)e^{\varphi(t)\vartheta_1}+c(t)(e^{\varphi(t)\vartheta_1}-1)M^\varepsilon)^2} + 4c(t)m^\varepsilon + 2\varphi(t)
+ \dfrac{q(t)E(t)\varphi^2(t)e^{\varphi(t)\vartheta_2}}{(\varphi(t)e^{\varphi(t)\vartheta_2}+c(t)(e^{\varphi(t)\vartheta_2}-1)M^\varepsilon)^2}$
\end{center}
$=0.063903198 > 0$. Hence from Theorem \ref{thm:t3}, $ \digamma^\ast(t) $, a bounded (enclosed within positive values) and positive ($>0$) solution to system \eqref{eq:1} is globally attractive, as seen in Figure \ref{fig:3}(a).
The difference between the two trajectories starting at initial values $\digamma(t_0)=0.06$ and $\digamma(t_0)=0.07$ is plotted against time $t$ in Figure \ref{fig:3}(b). It is now clear that as $t \to \infty $ the difference between the two trajectories tends to zero.\\
\textbf{Example 8.4.}
Let $\varrho(t)=8-(\sin t)$, $c(t)=7+(\cos t)$, $\varphi=0.023$,
$q(t)=0.48$, $E(t)=9$, $\vartheta_1=2$, $\vartheta_2=1$, then system \eqref{eq:1} becomes
\begin{center}
$\digamma^{'}(t)=\dfrac{(8-(\sin t))(0.023)\digamma(t-2)}{(0.023)e^{2(0.023)}+(7+(\cos t))(e^{2(0.023)}-1)\digamma(t-2)}$\vspace{-0.2cm}
\end{center}
\begin{center}
$-(0.023)\digamma(t) -(7+\cos t))\digamma^2(t)$
$-\dfrac{(0.48)(9)(0.023)\digamma(t-1)}{(0.023)e^{0.023}+(7+(\cos t))(e^{0.023}-1)\digamma(t-1)}\cdot$\vspace{-0.2cm}
\end{center}
By simple numerical computation, we have 
$\varrho_L\varphi_L=0.161$, $(c_M)^2(M^\varepsilon)^2=0.07661824$ for $\varepsilon=0.0001$. This implies that $\varrho_L\varphi_L > (c_M)^2(M^\varepsilon)^2$. 
Also from equation \eqref{eq:2},
\begin{center}\vspace{-0.2cm}
$\dfrac{-\varrho(t)\varphi^2(t)e^{\varphi(t)\vartheta_1}}{(\varphi(t)e^{\varphi(t)\vartheta_1}+c(t)(e^{\varphi(t)\vartheta_1}-1)M^\varepsilon)^2} + 4c(t)m^\varepsilon + 2\varphi(t) 
+ \dfrac{q(t)E(t)\varphi^2(t)e^{\varphi(t)\vartheta_2}}{(\varphi(t)e^{\varphi(t)\vartheta_2}+c(t)(e^{\varphi(t)\vartheta_2}-1)M^\varepsilon)^2}$\vspace{-0.2cm}
\end{center}
$=0.08561413 > 0$. Hence from Theorem \ref{thm:t3}, $ \digamma^\ast(t) $, a bounded (enclosed within positive values) and positive ($>0$) solution to system \eqref{eq:1} is globally attractive, as shown in Figure \ref{fig:3}(c). 
The difference between the two trajectories starting at initial values $\digamma(t_0)=0.1$ and $\digamma(t_0)=0.2$ is plotted against time $t$ in Figure \ref{fig:3}(d). It is now clear that as $t \to \infty $ the difference between the two trajectories tends to zero. \vspace{0.1cm}\\
\textbf{Example 8.5.}
Let $\varrho(t)=1.3+0.33\sin t$,  $c(t)=0.2-0.1428\cos t$,
$\varphi=0.02+0.01\cos t$, 
$q(t)=0.025-0.0166\sin t$, $E(t)=1-0.5\cos^2t$, $\vartheta_1=2$, $\vartheta_2=1$ and choose $\Theta=2\pi$ then system \eqref{eq:1} becomes \\
$\digamma^{'}(t)=$
\begin{center}\vspace{-0.2cm}
$\dfrac{(1.3+0.33\sin t)(0.02+0.01\cos t)\digamma(t-2)}{(0.02+0.01\cos t)e^{2(0.02+0.01\cos t)}+(0.2-0.1428\cos t)(e^{2(0.02+0.01\cos t)}-1)\digamma(t-2)}$
\end{center}
\begin{center}\vspace{-0.2cm}
$-(0.02+0.01(\cos t))\digamma(t) -(0.2-0.1428\cos t))\digamma^2(t)$
\end{center}
\begin{center}\vspace{-0.2cm}
$-\dfrac{(0.025-0.0166\sin t)(1-0.5\cos^2t))(0.02+0.01\cos t)\digamma(t-1)}{(0.02+0.01\cos t)e^{0.02+0.01\cos t}+(0.2-0.1428\cos t)(e^{0.02+0.01\cos t}-1)\digamma(t-1)}\cdot$
\vspace{0.05cm}\end{center}
For $\varepsilon=0.0001$, we can compute 
$ \Tilde{\varrho\varphi}=0.013 $, $(\Tilde{c} M^\varepsilon)^2=0.00081812 $
which implies that $\Tilde{\varrho\varphi}>(\Tilde{c} M^\varepsilon)^2. $ 
Hence from Theorem \ref{thm:t5}, system \eqref{eq:1} possesses not less than one positive($>0$) $ \Theta $-periodic solution, which is also shown in Figure \ref{fig:4}(a).\\
\textbf{Example 8.6.}
Let $\varrho(t)=7+0.33\cos t$,  $c(t)=5-0.25\sin t$, 
$\varphi=0.02+0.01\cos t$, 
$q(t)=0.025-0.0166\sin t$, $E(t)=1-0.5\cos^2t$, $\vartheta_1=2$, $\vartheta_2=1$ and choose $\Theta=2\pi$ then system \eqref{eq:1} becomes \vspace{0.2cm}\\
$\digamma^{'}(t)=$
\begin{center}\vspace{-0.2cm}
$\dfrac{(7+0.33\cos t)(0.02+0.01\cos t)\digamma(t-2)}{(0.02+0.01\cos t)e^{2(0.02+0.01\cos t)}+(5-0.25\sin t)(e^{2(0.02+0.01\cos t)})-1)\digamma(t-2)}$
\end{center}
\begin{center}
$-(0.02+0.01\cos t)\digamma(t) -(5-0.25\sin t)\digamma^2(t)$
\end{center}
\begin{center}
$-\dfrac{(0.025-0.0166\sin t)(1-0.5\cos^2t))(0.02+0.01\cos t))\digamma(t-1)}{(0.02+0.01\cos t)e^{0.02+0.01\cos t}+(5-0.25\sin t)(e^{0.02+0.01\cos t}-1)\digamma(t-1)}\cdot$
\end{center}
By simple numerical computation,
$ \Tilde{\varrho\varphi}=0.07 $, $(\Tilde{c} M^\varepsilon)^2=0.0514719$ for $\varepsilon=0.0001$.
This implies that $\Tilde{\varrho\varphi}>(\Tilde{c} M^\varepsilon)^2. $ 
Hence we have, from Theorem \ref{thm:t5}, system \eqref{eq:1} possesses minimum one positive($>0$) $ \Theta $-periodic solution, as shown in Figure \ref{fig:4}(b). \vspace{0.2cm}\\
\textbf{Example 8.7.}
Let $\varrho(t)=14+0.1\cos^2t$,  $c(t)=2+0.33\sin t$,
$\varphi=0.032$, 
$q(t)=0.03-0.01\sin t$, $E(t)=235$, $\vartheta_1=2$, $\vartheta_2=1$ and choose $\Theta=2\pi$ then system \eqref{eq:1} becomes
\begin{center}
$\digamma^{'}(t)=$ $\dfrac{(14+0.1\cos^2t)(0.032)\digamma(t-2)}{(0.032)e^{2(0.032)}+(2+0.33\sin t)(e^{2(0.032)})-1)\digamma(t-2)}$
\end{center}
\begin{center}\vspace{-0.2cm}
$-(0.032)\digamma(t) -(2+0.33\sin t)\digamma^2(t)$
$-\dfrac{(0.03-0.01\sin t)(235)(0.032)\digamma(t-1)}{(0.032)e^{0.032}+(2+0.33\sin t)(e^{0.032}-1)\digamma(t-1)}\cdot$ \vspace{-0.3cm}
\end{center}
We can calculate 
$ \Tilde{\varrho\varphi}=0.4496 $, $(\Tilde{c} M^\varepsilon)^2=0.2915307$ for $\varepsilon=0.0001$.
This implies that $\Tilde{\varrho\varphi}>(\Tilde{c} M^\varepsilon)^2. $ 
Hence from Theorem \ref{thm:t5}, system \eqref{eq:1} has one or more positive($>0$) $ \Theta $-periodic solutions. From Figure \ref{fig:5} we can observe that system \eqref{eq:1} admits triple-periodic solution.
Also, a double periodic solution can be seen in Figure \ref{fig:2}. \vspace{0.2cm}\\
\textbf{Example 8.8.} \vspace{0.2cm}
Let $\varrho(t)=6+\dfrac{\cos(t)}{8}$, $c(t)=0.01+0.0011\cos t$, $\varphi(t)=0.1$, $q(t)=0.3$,\\
$E(t)=10$, $\vartheta_1=2$, $\vartheta_2=1$ and choose $\Theta=2\pi$ then system \eqref{eq:1} becomes \vspace{0.1cm}\\
$\digamma^{'}(t)=$ \vspace{-0.6cm}
\begin{center}
$\dfrac{(6+\dfrac{\cos t}{8})(0.1)\digamma(t-2)}{(0.1)e^{2(0.1)}+(0.01+0.0011\cos t)(e^{2(0.1)}-1)\digamma(t-2)}$
$-(0.1)\digamma(t)$
\end{center}
\begin{center}\vspace{-0.2cm}
$-(0.01+0.0011\cos t)\digamma^2(t)$
$-\dfrac{(0.3)(10)(0.1)\digamma(t-1)}{(0.1)e^{0.1}+(0.01+0.0011\cos t)(e^{0.1}-1)\digamma(t-1)}\cdot$ 
\end{center}\vspace{-0.2cm}
We can calculate 
\vspace{0.1cm} $ M^\varepsilon =\dfrac{\varrho_M\varphi_M}{c_L}+\varepsilon=68.8977378$, $ m^\varepsilon=\dfrac{\varrho_L\varphi_L-(c_M)^2(M^\varepsilon)^2 }{c_M} -\varepsilon=0.1423013 $ for $\varepsilon=0.0001$. 
And $\varrho_L\varphi_L=0.5875$, $(c_M)^2(M^\varepsilon)^2=0.584865337$ which implies that $\varrho_L\varphi_L > (c_M)^2(M^\varepsilon)^2$. 
By taking $\varepsilon=0$, we get $M^0=68.8976378$, $m^0=0.1424013$. Also, $\varrho_L\varphi_L>(c_M)^2(M^0)^2$.
From Theorem \ref{thm:t2}, system \eqref{eq:1} is permanent, which is shown in Figure \ref{fig:6}.\\ Also, $ K_\varepsilon=\{\digamma(t)\in R / m^\varepsilon \leqslant \digamma(t) \leqslant M^\varepsilon \} $ = $\{\digamma(t)\in R / 0.1423013\leqslant \digamma(t) \leqslant 68.8977378 \}$ \\ and we just showed that $\varrho_L\varphi_L > (c_M)^2(M^\varepsilon)^2$. Therefore, from Theorem \ref{thm:t1}, $ K_\varepsilon$ satisfies the property of positive invariance (w.r.t system \eqref{eq:1}), which can also be seen in Figure \ref{fig:6}.\\
Also from equation \eqref{eq:2},\vspace{0.2cm}
\begin{center}
$\dfrac{-\varrho(t)\varphi^2(t)e^{\varphi(t)\vartheta_1}}{(\varphi(t)e^{\varphi(t)\vartheta_1}+c(t)(e^{\varphi(t)\vartheta_1}-1)M^\varepsilon)^2} + 4c(t)m^\varepsilon + 2\varphi(t) 
+ \dfrac{q(t)E(t)\varphi^2(t)e^{\varphi(t)\vartheta_2}}{(\varphi(t)e^{\varphi(t)\vartheta_2}+c(t)(e^{\varphi(t)\vartheta_2}-1)M^\varepsilon)^2}$
\vspace{0.1cm}
\end{center}
$=0.269876035 > 0$. Hence from Theorem \ref{thm:t3}, $ \digamma^\ast(t) $, bounded (bordered by positive values) positive ($>0$) solution to system \eqref{eq:1} is globally attractive, as seen in Figure \ref{fig:7}(a). The difference between two trajectories starting with initial values $\digamma(t_0)=40$ and $\digamma(t_0)=50$ is plotted over time $t$ in Figure \ref{fig:7}(b). It is clear that as $t \to \infty $ the difference between these trajectories tends to zero. 
\par And $ \Tilde{\varrho\varphi}=0.6 $, $(\Tilde{c} M^\varepsilon)^2 =0.474689827 $ which implies that $\Tilde{\varrho\varphi}>(\Tilde{c} M^\varepsilon)^2$. Hence using Theorem \ref{thm:t5}, there is minimum one positive($>0$) $\Theta$-periodic solution to system \eqref{eq:1} as shown in Figure \ref{fig:8}. A double periodic solution can be noticed in Figure \ref{fig:9}. 
\par On the other hand, let $\varrho(t)=6+\frac{\cos((10\pi/\sqrt3)t)}{8}$, i.e, period is rationally independent. According to theorem \ref{thm:t6}, there is an almost periodic solution as shown in Figure \ref{fig:10}. \vspace{0.3cm}\\
\textbf{Example 8.9.} Let $\varrho(t) = 6+\frac{cos(t)}{8}$,
$\varphi(t) = 0.10$, $c(t) = 0.01 - (0.0011)cos(t)$,
$q(t) = 0.3$, $E(t) = 10$. We plot the population curves for different values of $\vartheta_1$ and $\vartheta_2$ in figures \ref{fig:11} and \ref{fig:12}. Our observations are shown in Tables \ref{Table:1} and \ref{Table:2}.\vspace{0.3cm}\\
\textbf{Example 8.10.} Let $\varrho(t) = 1.3$, $\varphi(t) = 0.02 + (0.01)cos(t)$, $c(t) = 0.005$, $q(t) = 0.9$, $E(t) = 1.3$. The population curves for different values of $\vartheta_1$ and $\vartheta_2$ are shown in figures \ref{fig:13} and \ref{fig:14}. Our observations are shown in Tables \ref{Table:3} and \ref{Table:4}.
\section{Conclusion}\label{s9}
The dynamical characteristics of any real-world ecological model depend explicitly on time. Most of the environmental factors vary with time and thus incorporating temporal inhomogeneity into an ecological model helps to understand it better. \\

The present work is essentially concerned with the age-based growth model of hilsa fish with time-variant parameters. We have obtained the sufficient condition for positive invariance from Theorem \ref{thm:t1} and the sufficient condition for permanence from Theorem \ref{thm:t2}, respectively. Also, if $\varepsilon=0$ in Theorem \ref{thm:t1}, we obtain the condition in Theorem \ref{thm:t2}. This means that if $K_\varepsilon$ is positively invariant in system \eqref{eq:1}, then system \eqref{eq:1} must be permanent, which is pretty clear from figures \ref{fig:1}, \ref{fig:2} and \ref{fig:6}. That is, if the population of hilsa lies within a particular range (given by $ [m^\varepsilon, M^\varepsilon]$ (see Theorem \ref{thm:t1})) and lasts forever (as $t \rightarrow \infty$) in that particular range then the population would never go extinct. In addition, we attain suitable conditions for global attractivity of a bounded positive solution by formulating a scalar function, namely the Lyapunov function in Theorem \ref{thm:t3}. In figures\ref{fig:3}(a) and (c) and \ref{fig:7}(a), there is one such trajectory (hilsa population) in each of these figures that attracts the other trajectory (hilsa population starting at a different initial value) towards it, proving the global stability of hilsa population. \\

It is known that birth, growth, spawning and, immigration of hilsa is very much seasonal, thus indicating that the environmental factors that influence the life cycle of hilsa are seasonal. Taking this into account, in Theorems \ref{thm:t4} and \ref{thm:t5}, with the help of Brouwer fixed point and continuation theorems we have obtained some constraints for a positive periodic solution of system \eqref{eq:1}. We can see that if $\varepsilon$ is chosen suitably and condition in Theorem \ref{thm:t1} holds, then the condition in Theorem \ref{thm:t5} holds, that is, Theorem \ref{thm:t5} is weaker than Theorem \ref{thm:t1}. Finally, in Theorem \ref{thm:t6} we have proved that there is only one solution to system \eqref{eq:1} which is almost periodic in nature that is distinct from every other solution. Figures \ref{fig:4}, \ref{fig:5}, \ref{fig:8} and \ref{fig:9} show that hilsa population follows a periodic cycle and figure \ref{fig:10} shows that hilsa population can also be quasi-periodic. This means that the fluctuations in the hilsa population follow a regular repeating pattern. For example, in Figure \ref{fig:4} we see that the changes in hilsa population that includes changes due to various factors like birth, death, food resources, fishing, etc, repeat every five years. This before hand prediction helps in taking precautionary measures to prevent a fall in the hilsa population.  \\

Hilsa stocks have been fast decreasing in the river systems. Thus conserving the population of hilsa and maintaining its global stability is one of the major concerns of ecologists across India and Bangladesh. The role played by $\vartheta_1$ and $\vartheta_2$ (delay in maturity and delay in harvest respectively) in shaping the dynamical aspects of the system \eqref{eq:1} of hilsa is as shown in Tables \ref{Table:1}, \ref{Table:2}, \ref{Table:3} and \ref{Table:4}. In Table \ref{Table:1}, we see that when the delay in maturity $\vartheta_1$ increases to 2.208 (critical value of $\vartheta_1$) with $\vartheta_2$ remaining constant at 1, there is a drastic change in the dynamical characteristics of the system. Then if $\vartheta_2$ increases beyond 1, the system regains its dynamical characteristics. This means that for a maturity delay of 2.208, the delay in harvest must be greater than 1. Similarly, from Table \ref{Table:3}, for a maturity delay of 2.24, the delay in harvest must be greater than 1 (see Figures \ref{fig:11}, \ref{fig:12}, \ref{fig:13} and \ref{fig:14}). This interplay between $\vartheta_1$ and $\vartheta_2$ plays a significant role in shaping the dynamical characteristics of the system \eqref{eq:1}. Therefore, a change in the fishing pattern, by paying extra attention to immature hilsa and taking into account the size and body weight of hilsa before harvesting, and a change in the approach towards the ecological model of hilsa, by acknowledging the importance of time-variant parameters, can stop hilsa from becoming extinct.

\section*{Appendix A}

The rate at which the population changes depends on the following aspects: birth rate, rate of death and intraspecific competitiveness (either crowding or head-on intervention). Death rate and intraspecific competition rate both together contribute to the decline rate, which is instantaneous. Considering this we revise the standard logistic ordinary differential equation \cite{AWW06} into:
\[\frac{d\digamma(t)}{dt}=\varrho\digamma(t)-\varphi \digamma(t)-c\digamma^2(t),\eqno{(A1)}\]
where $\varrho$, $\varphi$ and $c$ denote the birth, death and removal (due to intraspecific competition) coefficients respectively for $\digamma(t)$. If $r=\varrho-\varphi$ (i.e. intrinsic growth rate of $\digamma(t)$) and $K=\frac{\varrho-\varphi}{c}$ (i.e. environmental carrying capacity of $\digamma(t)$), then Eq. $(A1)$ becomes the classical logistic equation:
\[\frac{d\digamma(t)}{dt}=r\digamma(t)\bigg(1-\frac{\digamma(t)}{K}\bigg).\eqno{(A2)}\]
In $(A1)$, the rate of decline and rate of growth (both of which depend on the density of population at the current instant $t$) are denoted by $\{\varphi \digamma(t)+c\digamma^2(t)\}$ and $\varrho\digamma(t)$ respectively. 
In order to acquire an equation to describe the number of
individuals that are alive at $t-\vartheta$ shall be still alive at $t$, consider a first ordered ODE of $\digamma(t)$ as a function
of $\digamma(t-\vartheta)$ and solve it as follows:
\[\begin{array}{l}
\dfrac{d\digamma(t)}{dt}=-\varphi \digamma(t)-c\digamma^2(t).
\end{array}
\]
Using variable separable technique and then
integrating from $t-\vartheta$ to $t$, we get
\[\begin{array}{l}
\int_{\digamma(t-\vartheta)}^{\digamma(t)}\frac{1}{\varphi \digamma(t)+c\digamma^2(t)}d\digamma=-\int_{t-\vartheta}^{t}dt,
\end{array}
\]
and hence
\[\digamma(t)=\frac{\varphi \digamma(t-\vartheta)}{\varphi e^{\varphi \vartheta}+c (e^{\varphi \vartheta}-1)\digamma(t-\vartheta)}.\eqno{(A3)}\]

\newpage
\begin{figure}[]\vspace{-0.5cm}
\centering\subfloat[]{\includegraphics[width=7.1in,height=4.2in]{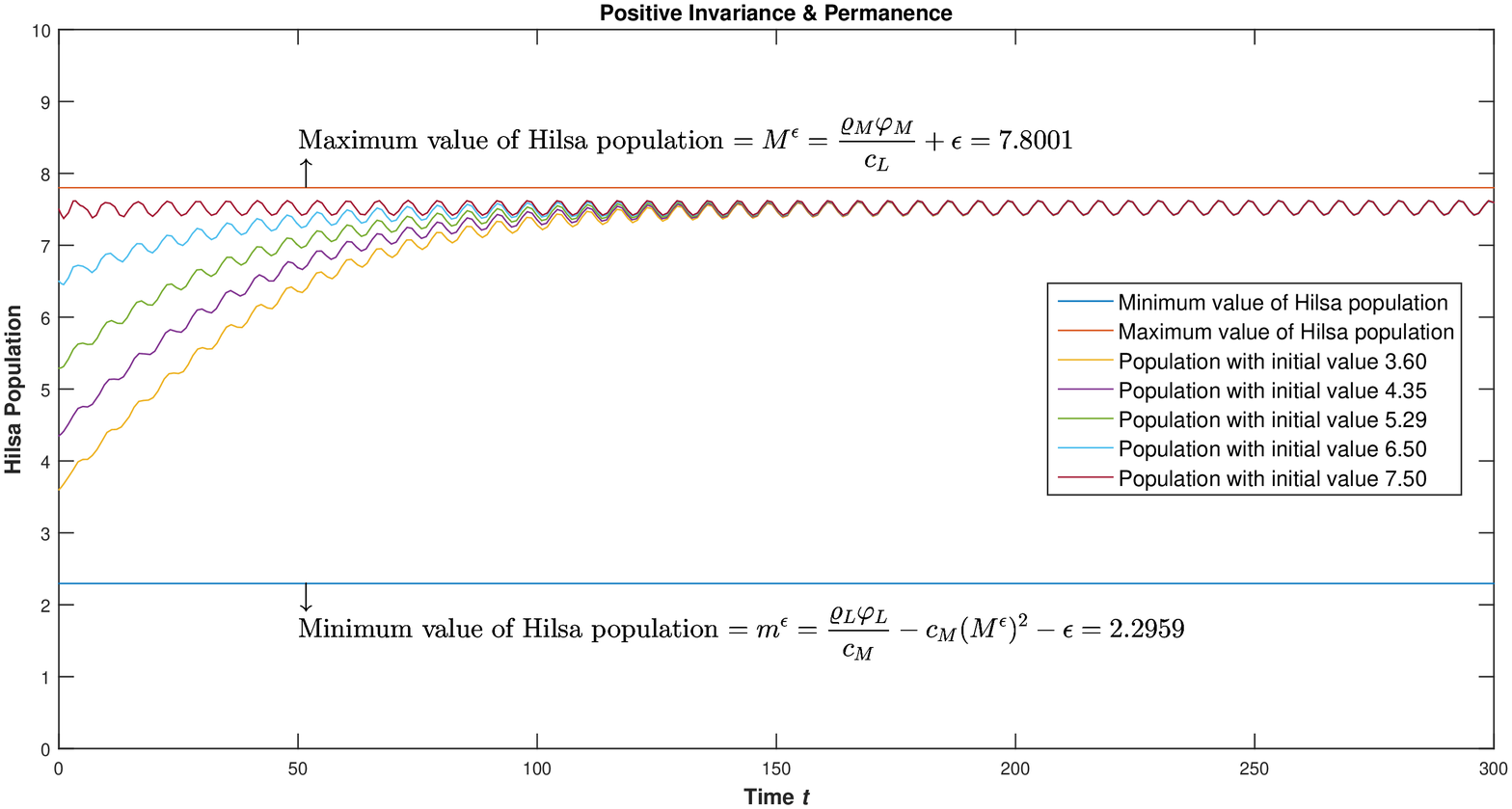}}\vspace{-0.8cm}\caption{Time series for Hilsa population $\digamma(t)$ showing positive invariance and permanence.}\label{fig:1}
\end{figure} 
\begin{figure}[]\centering
\subfloat[]{\includegraphics[width=7.1in,height=4.2in]{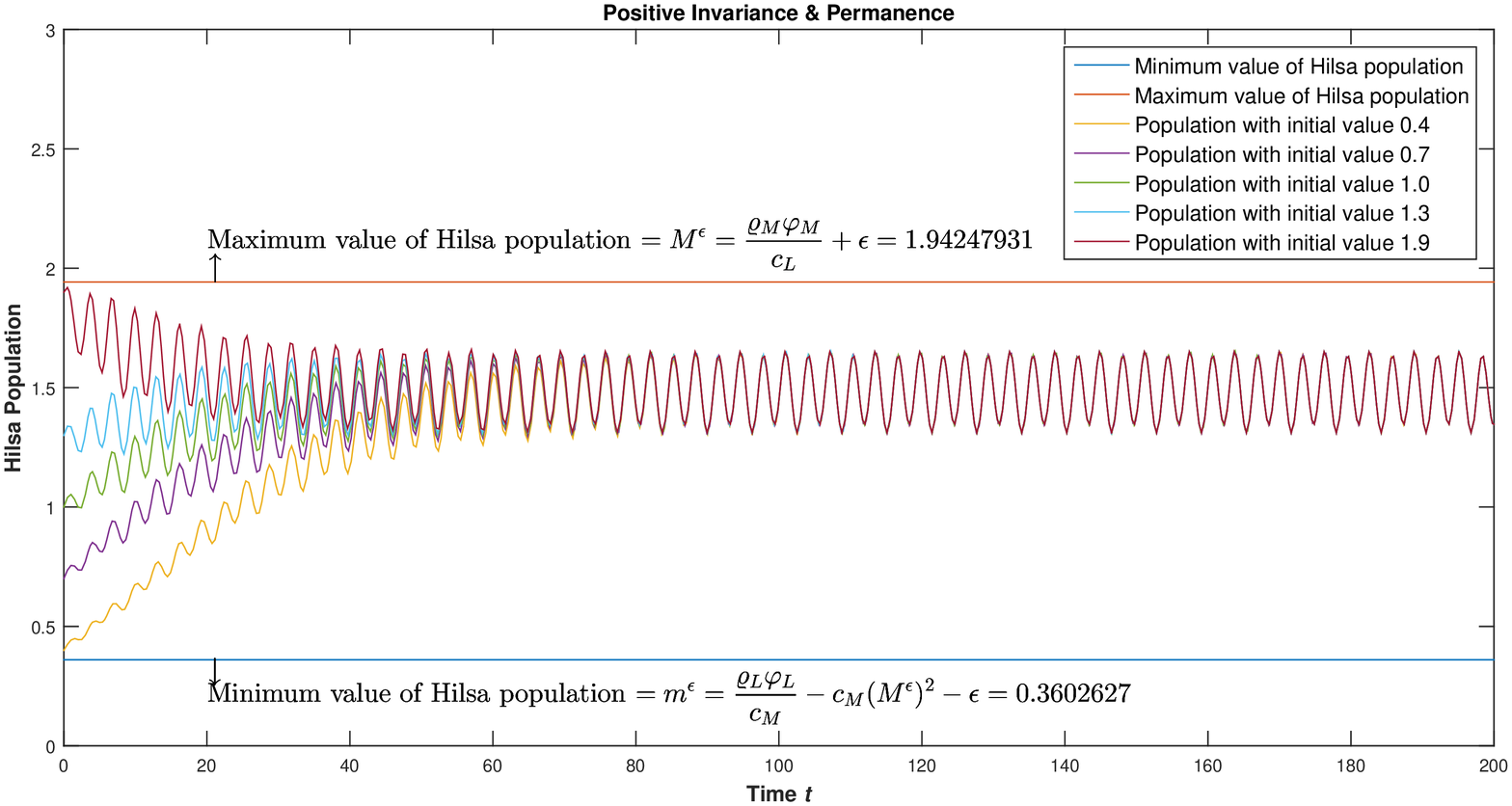}}\vspace{-0.8cm}\caption{Time series for Hilsa population $\digamma(t)$ showing positive invariance and permanence.}\label{fig:2}
\end{figure}
\begin{figure}\centering
\subfloat[]{\includegraphics[width=7.1in,height=4.2in]{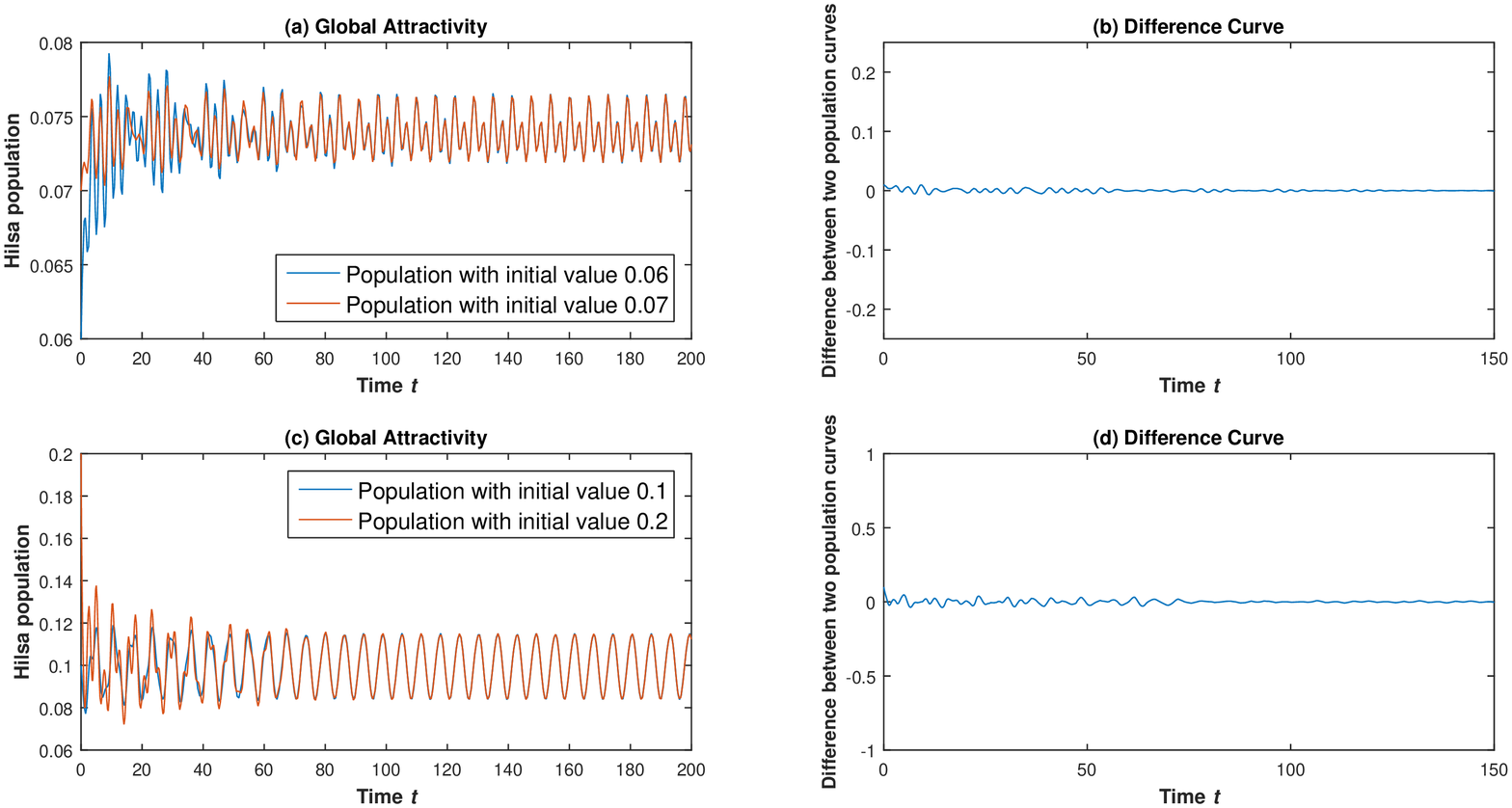}}\vspace{-0.8cm}\caption{Time series plots for population $\digamma(t)$ being globally attractive in (a) and (c). The corresponding difference curves in (b) and (c).}\label{fig:3}
\end{figure}
\begin{figure}[]\centering
\vspace{-0.8cm}\subfloat[]{\includegraphics[width=7.1in,height=4.2in]{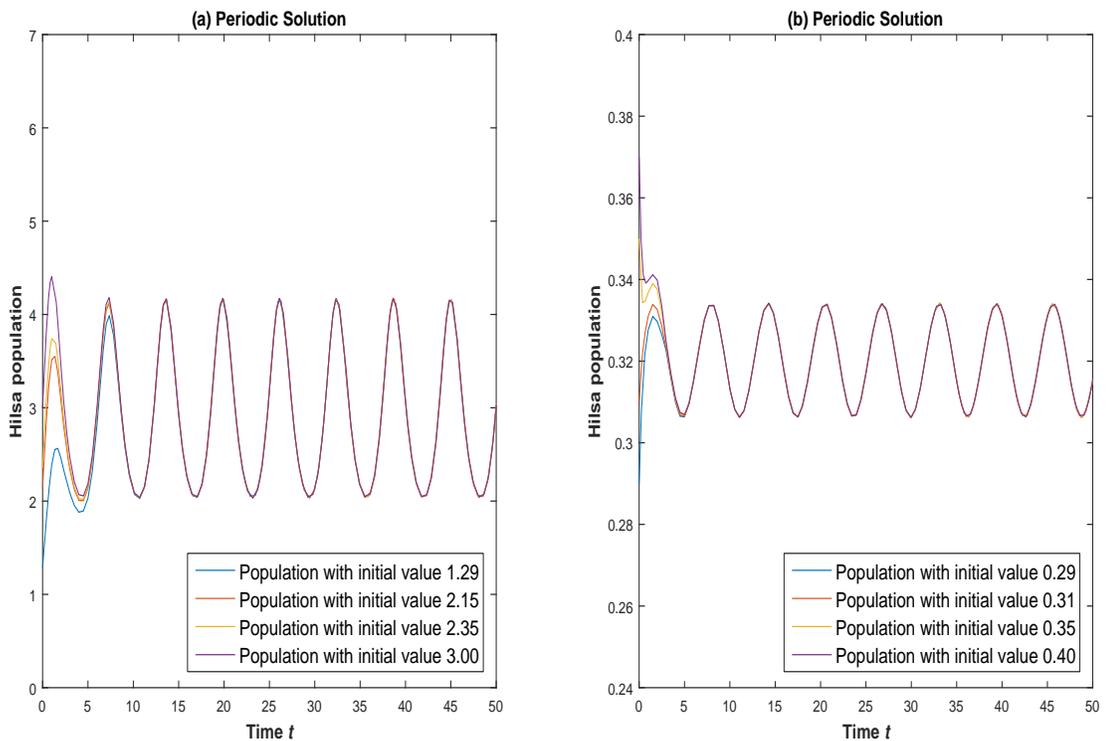}}\vspace{-0.8cm}\caption{Time series graphs showing periodicity in population $\digamma(t)$.}\label{fig:4}
\end{figure}
\begin{figure}[]\centering
\subfloat[]{\includegraphics[width=7.1in,height=4.2in]{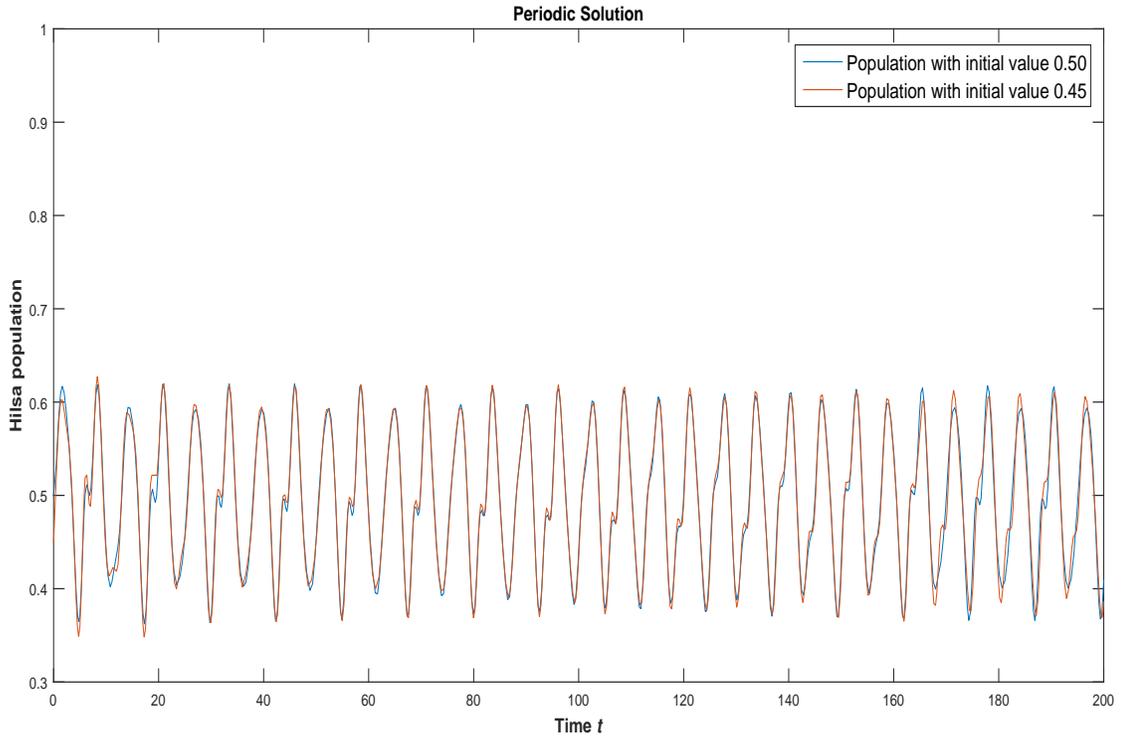}}\vspace{-0.8cm}\caption{Time series diagram for population $\digamma(t)$ being periodic in nature.}\label{fig:5}
\end{figure}
\begin{figure}\centering\vspace{-0.5cm}
\subfloat[]{\includegraphics[width=7.1in,height=4.2in]{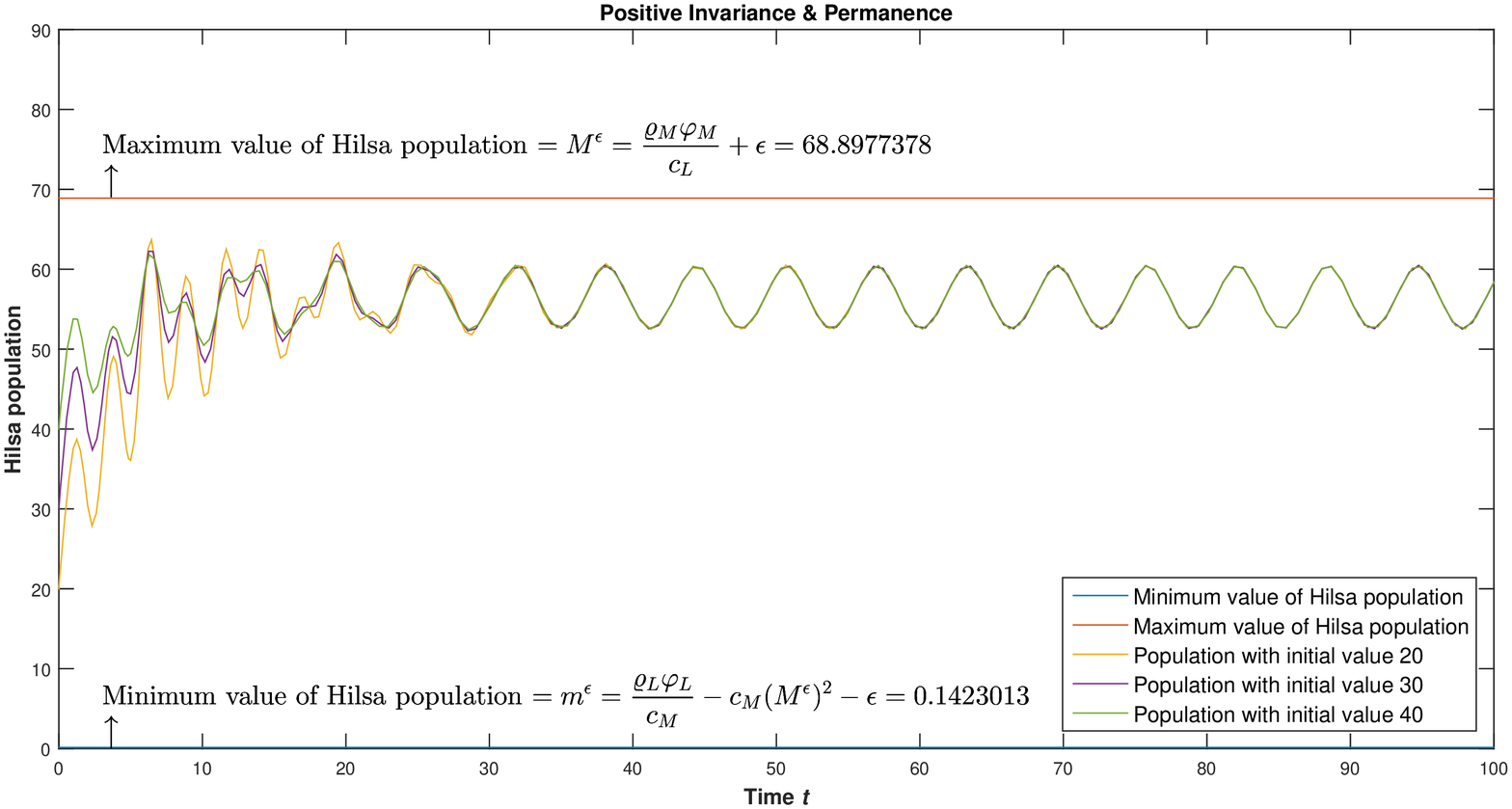}}\vspace{-0.8cm}\caption{Time series graph for Hilsa population $\digamma(t)$ showing positive invariance and permanence.}\label{fig:6}\vspace{0.2cm}
\end{figure}
\begin{figure}
\centering
\subfloat[]{\includegraphics[width=7.1in,height=4.2in]{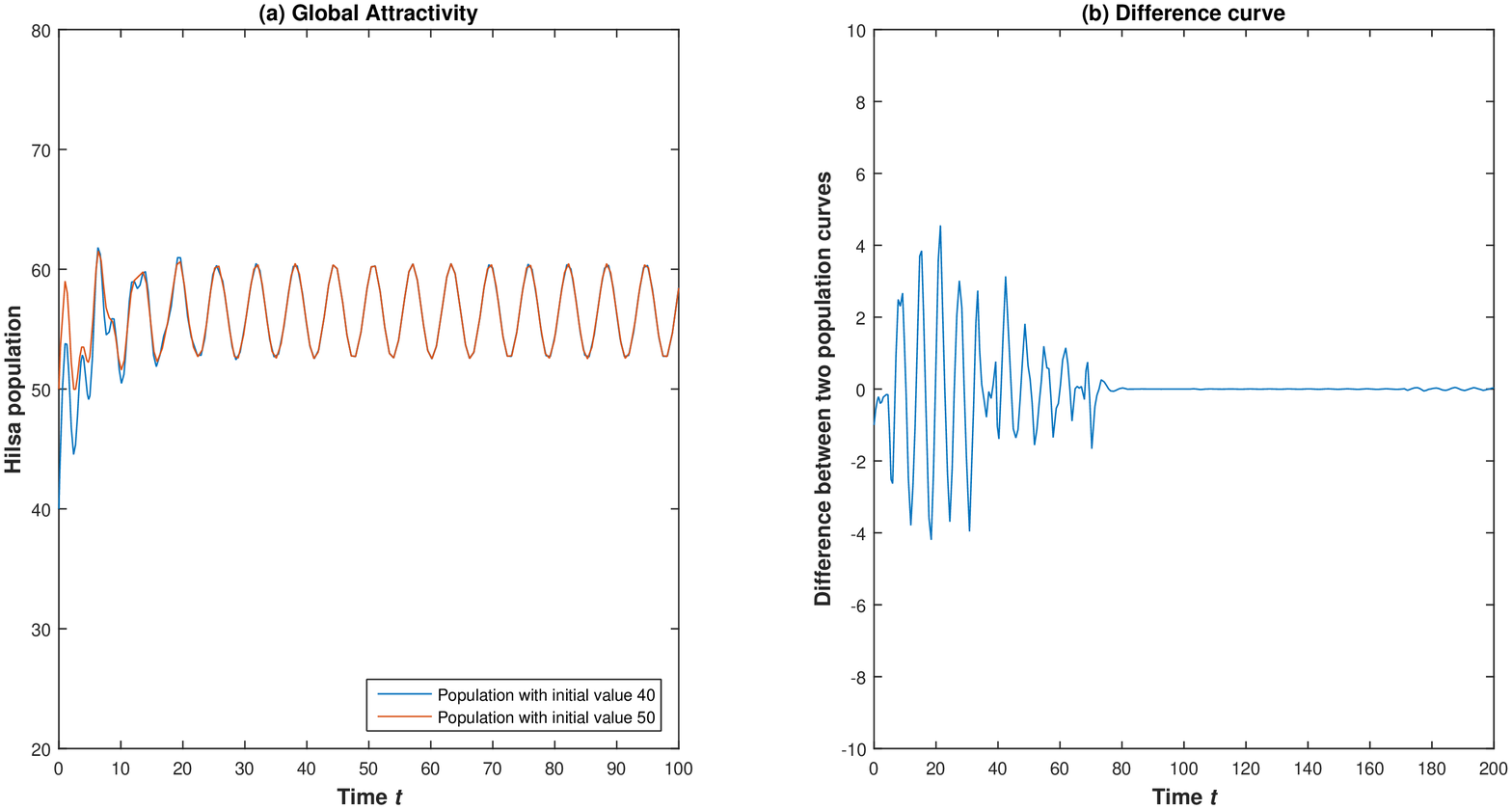}}\vspace{-0.8cm}\caption{Time series for population $\digamma(t)$ being globally attractive in (a). The corresponding difference curve in (b).}\label{fig:7}
\end{figure}
\begin{figure}[]
\vspace{-0.8cm}\subfloat[]{\includegraphics[width=7.1in,height=4.2in]{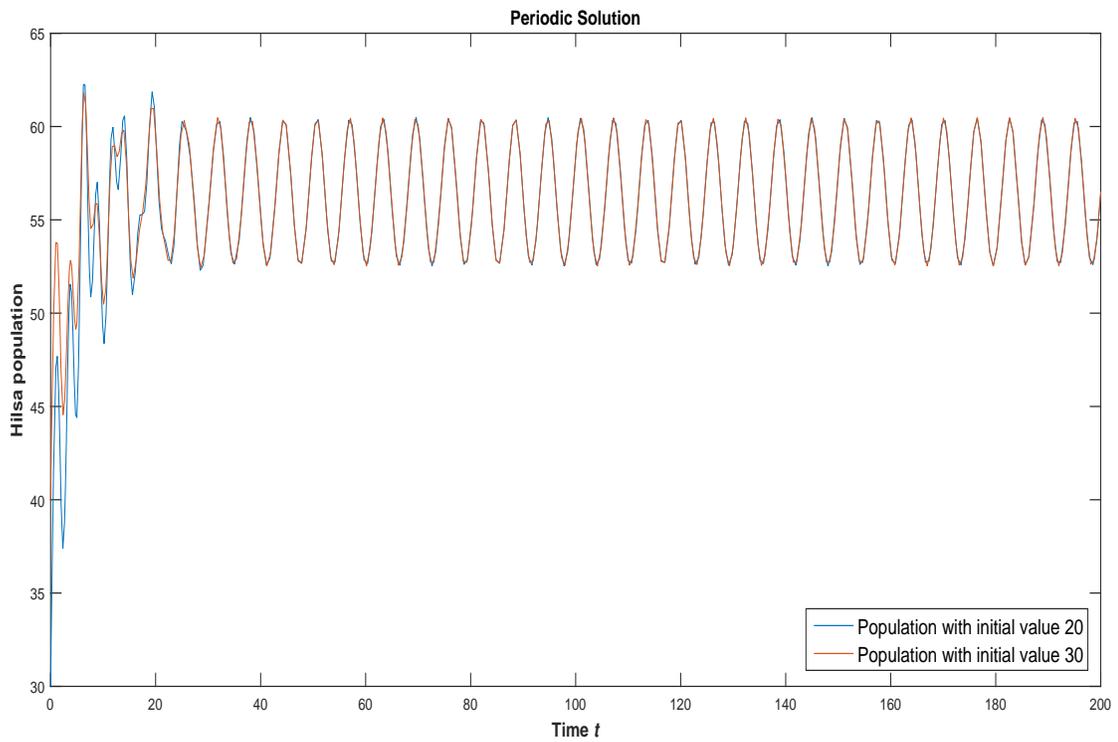}}\vspace{-0.8cm}\caption{Time series plot of a periodic solution $\digamma(t)$.}\label{fig:8}
\vspace{0.25cm}
\end{figure}
\begin{figure}[]
\vspace{-0.8cm}\subfloat[]{\includegraphics[width=7.1in,height=4.2in]{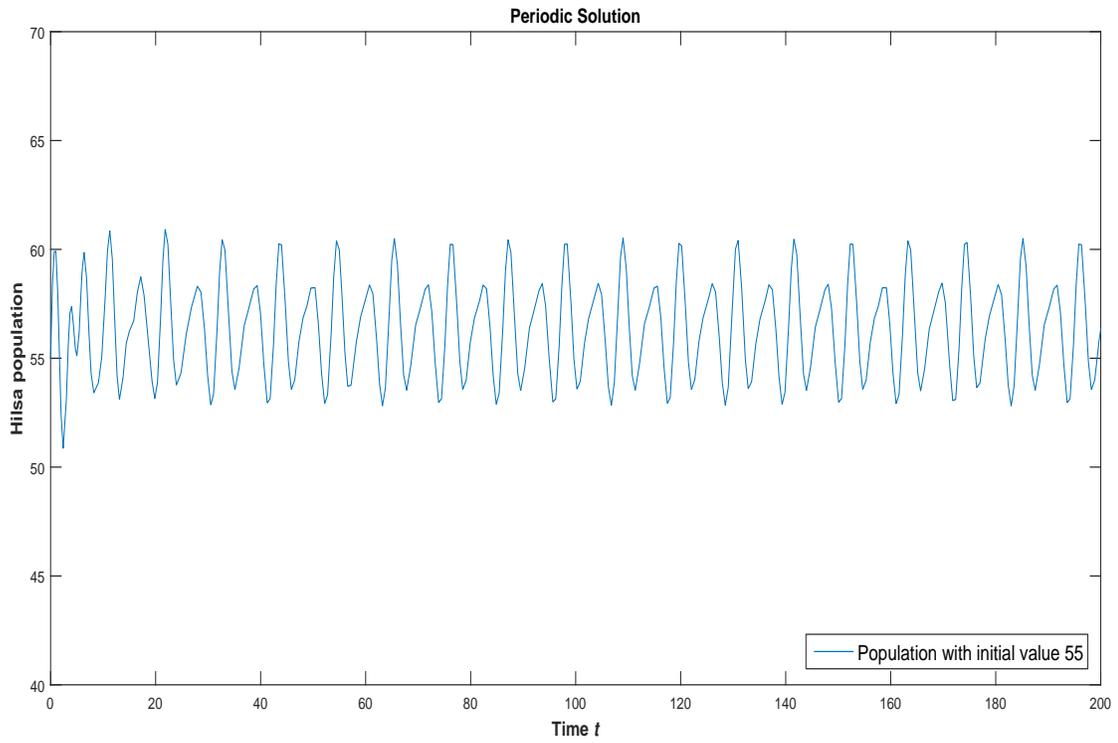}}\vspace{-0.8cm}\caption{Time series of a double periodic solution.}\label{fig:9}
\vspace{0.25cm}
\end{figure}
\begin{figure}[]
\vspace{-0.8cm}\subfloat[]{\includegraphics[width=7.1in,height=4.2in]{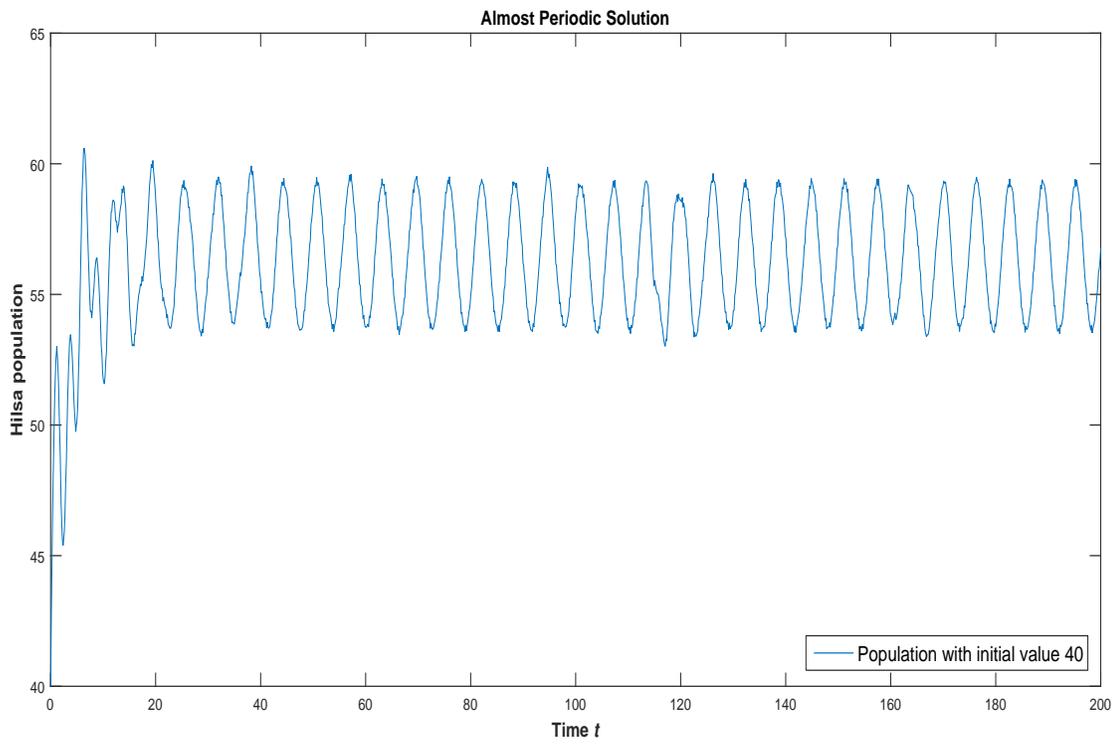}}\vspace{-0.8cm}\caption{Time series of an almost periodic solution.}\label{fig:10}
\vspace{0.25cm}
\end{figure}
\begin{figure}[]
\vspace{-0.3cm}\subfloat[]{\includegraphics[width=7.1in,height=4.2in]{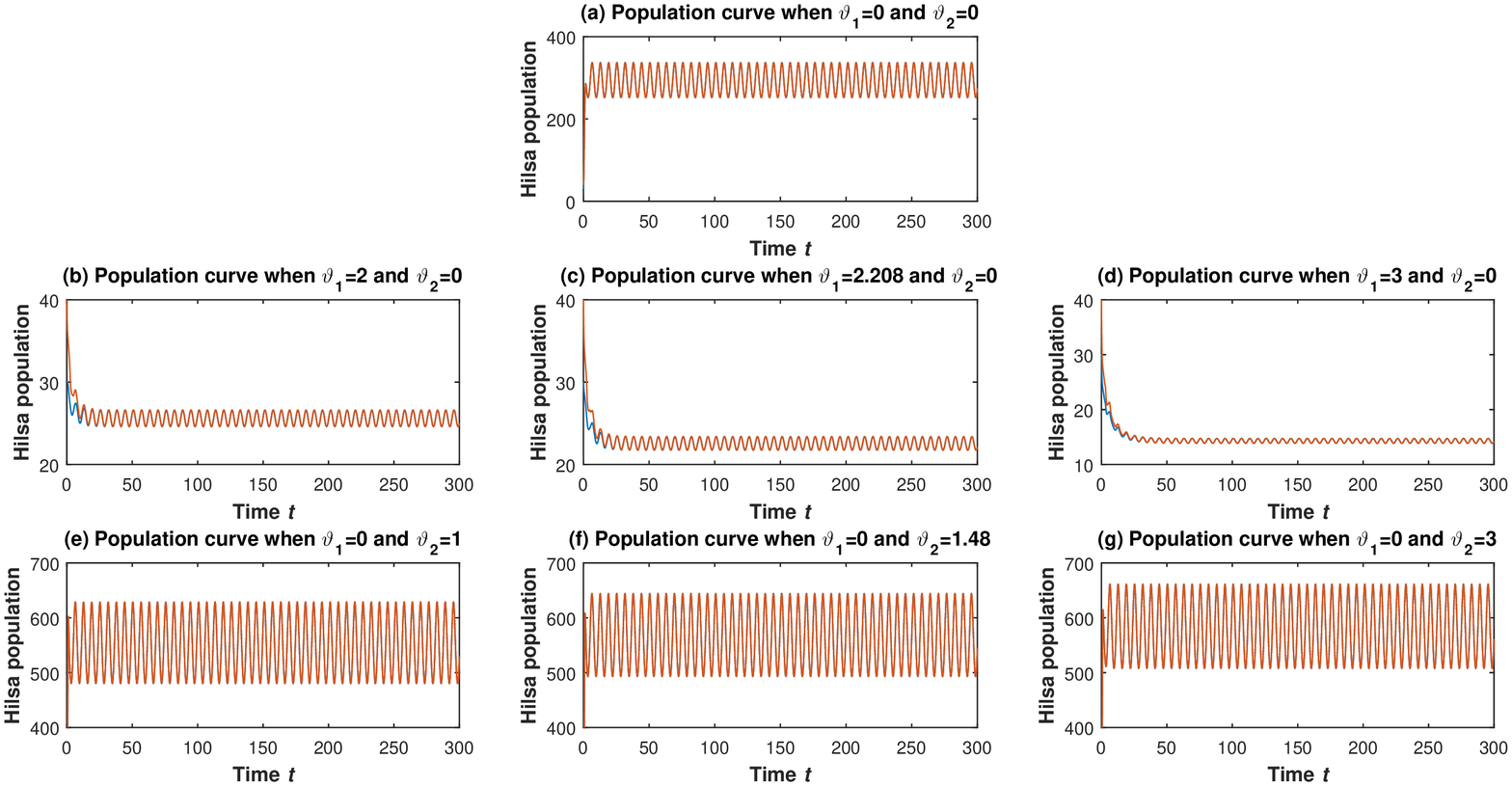}}\vspace{-0.8cm}\caption{Time series plots for Hilsa population $\digamma(t)$ when $\vartheta_1=\vartheta_2=0$ in (a); $\vartheta_1>0$, $\vartheta_2=0$ in (b),(c) and (d); $\vartheta_1=0$, $\vartheta_2>0$ in (e), (f) and (g). }\label{fig:11}
\end{figure}
\begin{figure}
\vspace{-0.4cm}\subfloat[]{\includegraphics[width=7.1in,height=4.2in]{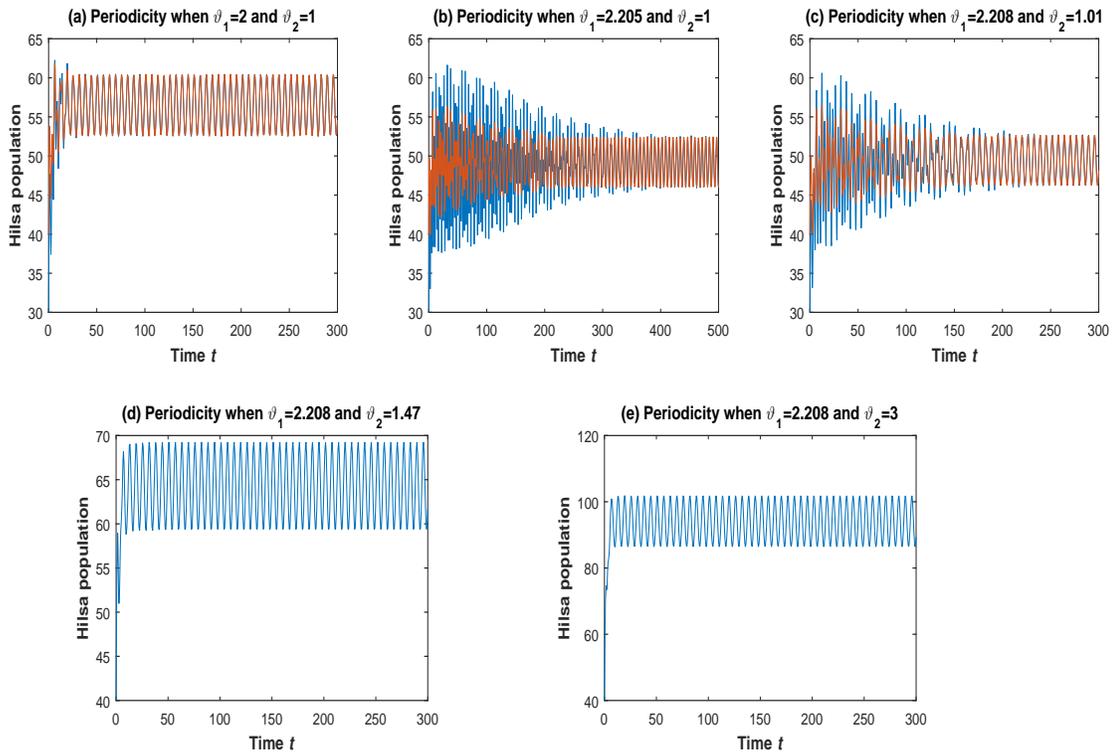}}\vspace{-0.8cm}\caption{Time series diagrams for Hilsa population $\digamma(t)$ when $\vartheta_1>0$, $\vartheta_2>0$.}\label{fig:12}
\vspace{-0.15cm}
\end{figure}
\begin{figure}[]
\vspace{-0.8cm}\subfloat[]{\includegraphics[width=7.1in,height=4.2in]{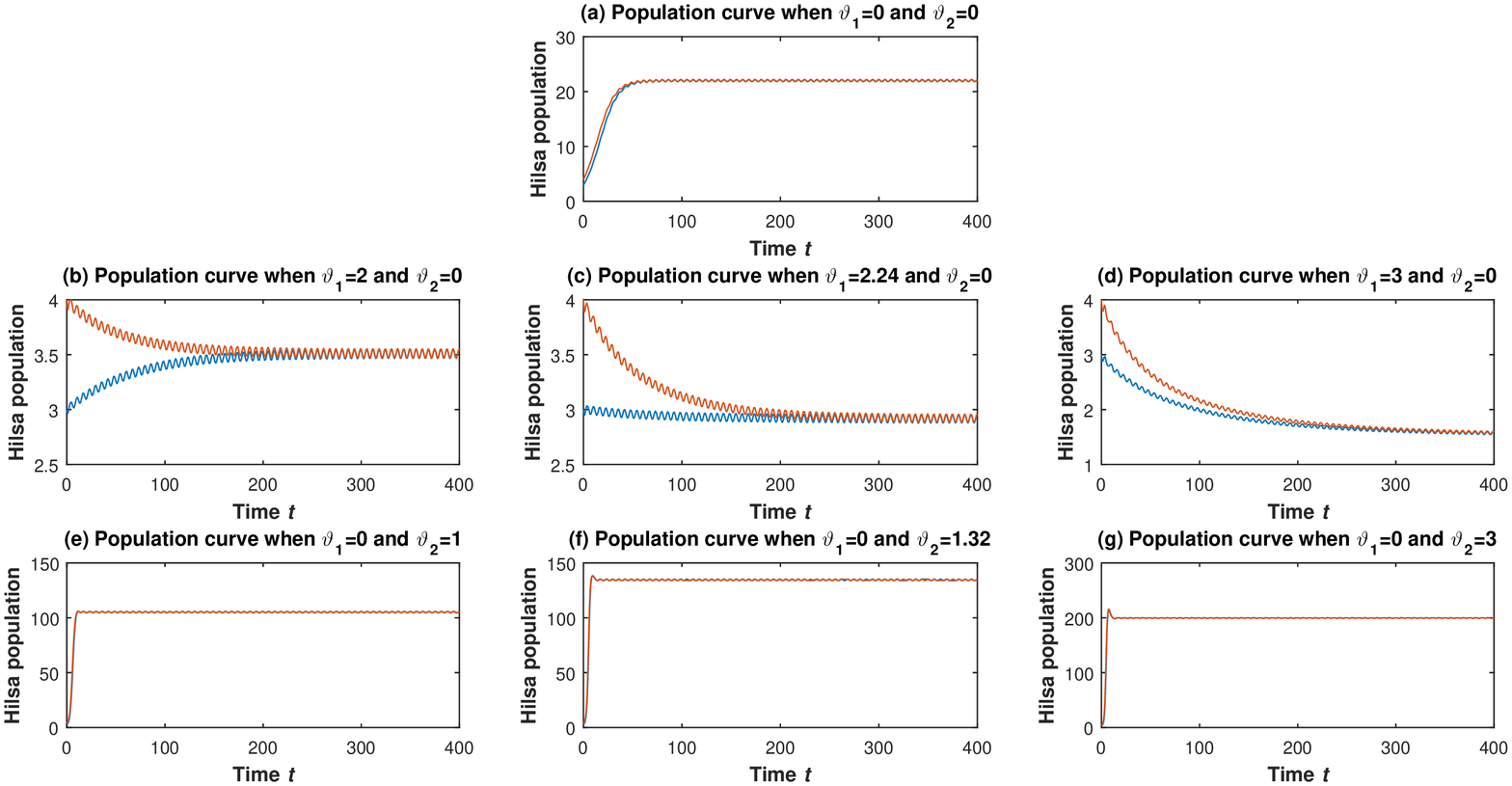}}\vspace{-0.8cm}\caption{Time plots for Hilsa population $\digamma(t)$ when $\vartheta_1=\vartheta_2=0$ in (a); $\vartheta_1>0$, $\vartheta_2=0$ in (b),(c) and (d); $\vartheta_1=0$, $\vartheta_2>0$ in (e), (f) and (g).}\label{fig:13}
\end{figure}
\begin{figure}[]
\vspace{-0.8cm}\subfloat[]{\includegraphics[width=7.1in,height=4.2in]{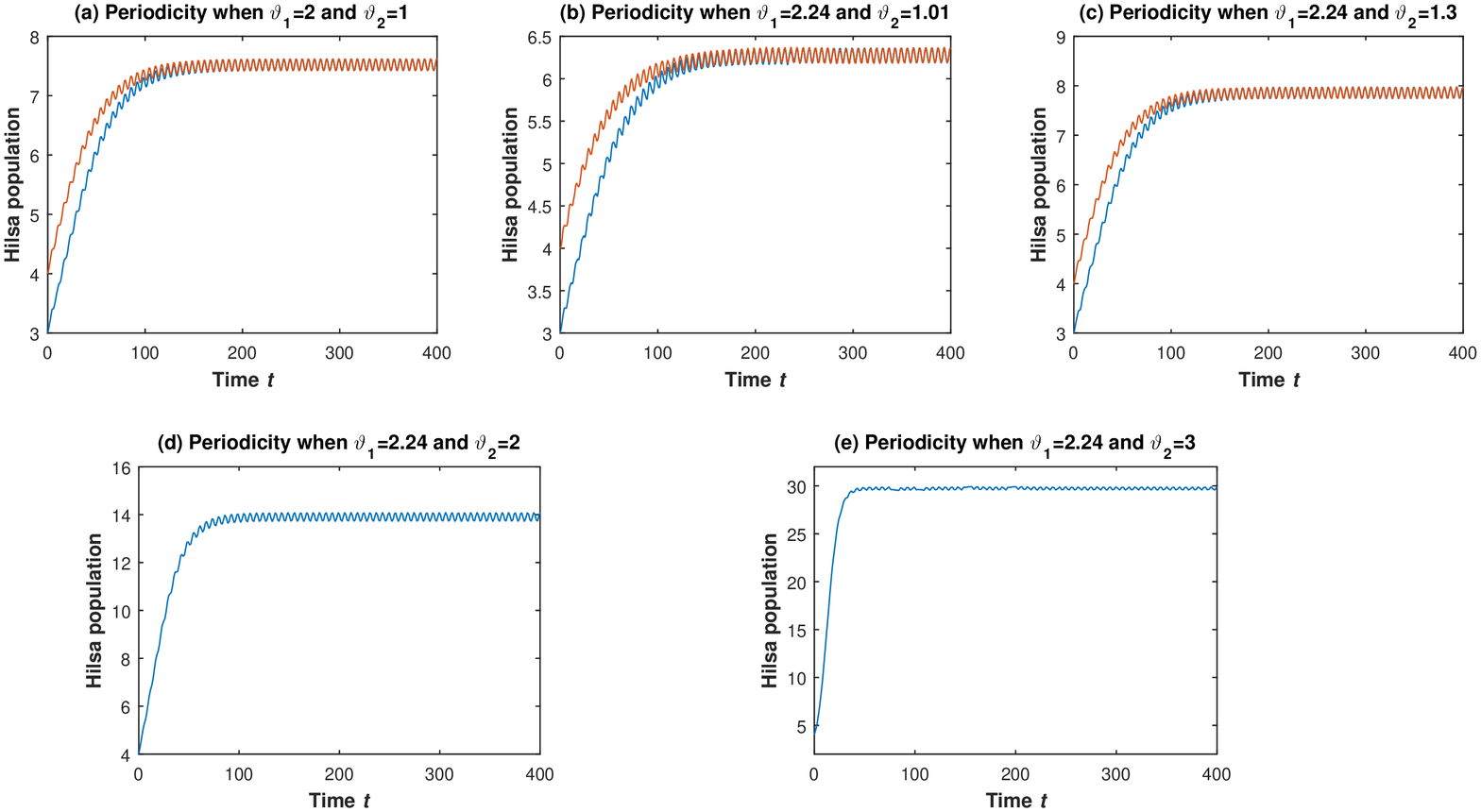}}\vspace{-0.8cm}\caption{Time series plots for Hilsa population $\digamma(t)$ when $\vartheta_1>0$, $\vartheta_2>0$.}\label{fig:14}
\end{figure}
\begin{table}[h]
\centering
\caption{Population dynamics for various values of $\vartheta_1$ and $\vartheta_2$}
\label{Table:1}
\scalebox{0.85}{
\begin{tabular}{|p{3.2cm}|p{2.2cm}|p{3cm}|p{2.5cm}|p{2.5cm}|p{3.7cm}|} \hline 
$\mathbf{\vartheta_1}$ and $\mathbf{\vartheta_2}$ & \textbf{Positive} \textbf{Invariance} & \textbf{Permanance} & \textbf{Global} \hspace{1cm} \textbf{Attractivity}  & \textbf{Periodicity} & \textbf{Figure} \\ \hline
$\vartheta_1=0$  & \text{No } & \text{No} & \text{Yes} & \text{Yes} & \text{Fig. \ref{fig:11} (a)} \\ 
$\vartheta_2=0$ & & & & & \\ \hline
$\vartheta_1>0$ & \text{Yes} & \text{Yes} & \text{Yes} & \text{Yes}  & \text{Fig. \ref{fig:11} (b),(c),(d)}\\ 
$\vartheta_2=0$ & & & & & \\ \hline
$\vartheta_1=0$ & \text{No} & \text{No} & \text{Yes} & \text{Yes}  & \text{Fig. \ref{fig:11} (e),(f),(g)}\\ 
$\vartheta_2>0$ & & & & & \\ \hline
$2\leqslant\vartheta_1<2.208$ & \text{Yes} & \text{Yes} & \text{Yes} & \text{Yes}  & \text{Fig. \ref{fig:12} (a),(b)}\\ 
$\vartheta_2=1$ & & & & &\\ \hline
$\vartheta_1=2.208$ & \text{No} & \text{No} & \text{No} & \text{No}  & \text{}\\
$\vartheta_2=1$ & & & & &\\ \hline
$2.208\leqslant\vartheta_1\leqslant 3$ & \text{Yes} & \text{Yes} & \text{Yes} & \text{Yes}  & \text{Fig. \ref{fig:12} (c),(d)}\\
$1.01\leqslant\vartheta_2\leqslant 1.47$ & & & & &\\ \hline
$2.208\leqslant\vartheta_1\leqslant 3$  & \text{No} & \text{No} & \text{Yes} & \text{Yes}  & \text{Fig. \ref{fig:12} (e)}\\ 
$1.47<\vartheta_2\leqslant 3$ & & & & &\\ \hline
\end{tabular}}
\end{table}
\begin{table}[h]
\centering
\caption{Population based on various values of $\vartheta_1$ and $\vartheta_2$}
\label{Table:2}\scalebox{0.98}{
\begin{tabular}{|p{3cm}|p{13cm}|} \hline
$\mathbf{\vartheta_1}$ and $\mathbf{\vartheta_2}$ & \textbf{Nature of the system}\\ \hline
$\vartheta_1=0$  & \text{When there is no delay in maturity and harvestation the population }\\ 
$\vartheta_2=0$ & \text{grows immensely. (Fig. \ref{fig:11} (a))}\\ \hline
$\vartheta_1>0$ & \text{When there is no harvesting delay the population decreases.}\\ 
$\vartheta_2=0$ &  \text{ (Fig. \ref{fig:11} (b),(c),(d))} \\ \hline
$\vartheta_1=0$ & \text{When there is no delay in maturity the population increases enormously.}\\ 
$\vartheta_2>0$ &  \text{ (Fig. \ref{fig:11} (e),(f),(g))}\\ \hline
$2\leqslant\vartheta_1<2.208$ & \text{Population is positively invariant, permanent, globally stable and }\\ 
$\vartheta_2=1$ &  \text{periodic. (Fig. \ref{fig:12} (a),(b))}\\ \hline
$\vartheta_1=2.208$ & \text{Population curve(s) become unbounded i.e, It is not positively invariant}\\ 
$\vartheta_2=1$ &  \text{nor permanent nor globally stable nor periodic.}\\ \hline
$2.208\leqslant\vartheta_1\leqslant 3$ & \text{Population is positively invariant, permanent, globally stable and}\\
$1.01\leqslant\vartheta_2\leqslant 1.47$ & \text{periodic. (Fig. \ref{fig:12} (c),(d))} \\ \hline
$2.208\leqslant\vartheta_1\leqslant 3$  & \text{Population increases rapidly hence it stops being positively invariant}\\ 
$1.47<\vartheta_2\leqslant 3$ & \text{and permanent. But it stays globally attractive and periodic. } \\
& \text{(Fig. \ref{fig:12} (e))}\\ \hline
\end{tabular}}
\end{table} 
\begin{table}[h]
\centering
\caption{Population dynamics for various values of $\vartheta_1$ and $\vartheta_2$}
\label{Table:3}
\scalebox{0.85}{
\begin{tabular}{|p{3cm}|p{2.2cm}|p{3cm}|p{2.5cm}|p{2.5cm}|p{4cm}|} \hline
$\mathbf{\vartheta_1}$ and $\mathbf{\vartheta_2}$ & \textbf{Positive} \textbf{Invariance} & \textbf{Permanance} & \textbf{Global} \hspace{1cm} \textbf{Attractivity}  & \textbf{Periodicity} & \textbf{Figure} \\ \hline
$\vartheta_1=0$  & \text{No} & \text{No} & \text{Yes} & \text{Yes} & \text{Fig. \ref{fig:13} (a)}\\ 
$\vartheta_2=0$ & & & & &\\ \hline
$\vartheta_1>0$ & \text{Yes} & \text{Yes} & \text{Yes} & \text{Yes} & \text{Fig. \ref{fig:13} (b),(c),(d)}\\
$\vartheta_2=0$ & & & && \\ \hline
$\vartheta_1=0$ & \text{No} & \text{No} & \text{Yes} & \text{Yes} & \text{Fig. \ref{fig:13} (e),(f),(g)}\\ 
$\vartheta_2>0$ & & & & &\\ \hline
$2\leqslant\vartheta_1<2.24$ & \text{Yes} & \text{Yes} & \text{Yes} & \text{Yes} & \text{Fig. \ref{fig:14} (a)}\\ 
$\vartheta_2=1$ & & & & &\\ \hline
$\vartheta_1=2.24$ & \text{No} & \text{No} & \text{No} & \text{No} & \text{}\\ 
$\vartheta_2=1$ & & & && \\ \hline
$2.24\leqslant\vartheta_1\leqslant 3$ & \text{Yes} & \text{Yes} & \text{Yes} & \text{Yes} & \text{Fig. \ref{fig:14} (b),(c)}\\
$1.01\leqslant\vartheta_2\leqslant 1.3$ & & & & &\\ \hline
$2.24\leqslant\vartheta_1\leqslant 3$  & \text{No} & \text{No} & \text{Yes} & \text{Yes} & \text{Fig. \ref{fig:14} (d),(e)}\\ 
$1.3<\vartheta_2\leqslant 3$ & & & & &\\ \hline
\end{tabular}}
\end{table}
\begin{table}[h]
\centering
\caption{Population based on various values of $\vartheta_1$ and $\vartheta_2$}
\label{Table:4}
\begin{tabular}{|p{3cm}|p{13cm}|}\hline
$\mathbf{\vartheta_1}$ and $\mathbf{\vartheta_2}$ & \textbf{Nature of the system}\\ \hline
$\vartheta_1=0$ & \text{When there is no delay in maturity and harvestation the population }\\ 
$\vartheta_2=0$ & \text{grows immensely. (Fig. \ref{fig:13} (a))}\\ \hline
$\vartheta_1>0$ & \text{When there is no harvesting delay the population decreases.}\\ 
$\vartheta_2=0$ & \text{ (Fig. \ref{fig:13} (b),(c),(d))}\\ \hline
$\vartheta_1=0$ & \text{When there is no delay in maturity the population increases enormously. }\\ 
$\vartheta_2>0$ & \text{(Fig. \ref{fig:13} (e),(f),(g))} \\ \hline
$2\leqslant\vartheta_1<2.24$ & \text{Population is positively invariant, permanent, globally stable and }\\ 
$\vartheta_2=1$ &  \text{periodic. (Fig. \ref{fig:14} (a))}\\ \hline
$\vartheta_1=2.24$ & \text{Population curve(s) become unbounded i.e, It is not positively invariant}\\ 
$\vartheta_2=1$ &  \text{nor permanent nor globally stable nor periodic.}\\ \hline
$2.24\leqslant\vartheta_1\leqslant 3$ & \text{Population is positively invariant, permanent, globally stable and}\\
$1.01\leqslant\vartheta_2\leqslant 1.3$ & \text{periodic. (Fig. \ref{fig:14} (b),(c))} \\ \hline
$2.24\leqslant\vartheta_1\leqslant 3$  & \text{Population increases rapidly hence it stops being positively invariant}\\ 
$1.3<\vartheta_2\leqslant 3$ & \text{and permanent. But it stays globally attractive and periodic. } \\ 
& \text{(Fig. \ref{fig:14} (d),(e))}\\ \hline
\end{tabular}
\end{table}

\end{document}